\newtheorem{theorem}{Theorem}[section] 
\newtheorem{corollary}[theorem]{Corollary}
\newtheorem{proposition}[theorem]{Proposition}
\newtheorem{definition}[theorem]{Definition}
\newtheorem{example}[theorem]{Example} 
\newtheorem{remark}[theorem]{Remark} 
\newcommand{\R}{\mathbb{R}}
\renewcommand\@biblabel[1]{#1.}
\begin{document}
\title{Joint measurability through Naimark's theorem}
\author{Roberto Beneduci\thanks{e-mail rbeneduci@unical.it}\\
{\em Dipartimento di Fisica,}\\
{\em Universit\`a della Calabria,}\\
and\\
{\em Istituto Nazionale di Fisica Nucleare, Gruppo c. Cosenza,}\\
{\em 87036 Arcavacata di Rende (Cs), Italy,}\\
}
\date{}
\maketitle
\begin{abstract}
\noindent
 We use Naimark's dilation theorem in order to characterize the joint measurability of two POVMs. Then, we analyze the joint measurability of two commutative POVMs $F_1$ and $F_2$ which are the smearing of two self-adjoint operators $A_1$ and $A_2$ respectively. We prove that the compatibility of $F_1$ and $F_2$ is connected to the existence of two compatible self-adjoint dilations $A_1^+$ and $A_2^+$ of $A_1$ and $A_2$ respectively. As a corollary we prove that each couple of self-adjoint operators can be dilated to a couple of compatible self-adjoint operators. Next, we analyze the joint measurability of the unsharp position and momentum observables and show that it provides a master example of the scheme we propose. Finally, we give a sufficient condition for the compatibility of two effects.

\vskip.5cm
\noindent
\textbf{Mathematics subject classification (2010)}: 81P15, 81P45, 46N50, 28B15, 47N50

\vskip.3cm
\noindent
\textbf{Keywords}: Quantum Measurement, Joint measurability, Positive Operator valued measures, Naimark's dilation theorem.
\end{abstract}

\section{Introduction}
Recently there has been a renewed interest in the problem of the joint measurability (compatibility) of quantum observables in the framework of the operational approach to quantum mechanics \cite{Busch1,Reeb,Heunen,Pelloppaa,Busch2,Gudder,Heino,Spekkens,Wolf}. Such an approach rests on the use of Positive Operator Valued Measures (POVMs) in order to represent quantum observables \cite{Ali,Ali1,Prugovecki,Busch,Davies,Holevo,Ludwig} and generalizes the standard approach where a quantum observable is represented by self-adjoint operators. Indeed, self-adjoint operators are in one-to-one correspondence with Projection Valued Measures (PVMs) which define a subset of the set of POVMs. In particular, a PVM is an orthogonal POVM. 

One of the main advantage of POVMs with respect to self-adjoint operators is that two POVMs can be jointly measurable also if they do not commute while two self-adjoint operators are jointly measurable if and only if they commute. 

As a relevant physical example one can consider the case of the position and momentum observables, $Q$, $P$, in the Hilbert space $\mathcal{H}=L^2(\mathbb{R})$. Although they are incompatible they can be smeared to two jointly measurable POVMs, $F^Q$, $F^P$. Moreover, $Q$ is the sharp version of $F^Q$, i.e., $Q$ and $F^Q$ generate the same von Neumann algebra and $P$ is the sharp version of $F^P$ \cite{B4}. It is worth remarking that the existence of the compatible smearings $F^Q$ and $F^P$ is connected to the existence of two commuting dilations $Q^+$ and $P^+$ of $Q$ and $P$ in an extended Hilbert space $\mathcal{H}$ as it is illustrated by the following diagram (see example \ref{QP}).

\begin{equation*}
\xymatrix{ 
Q^+\ar@{<->}[ddr]_{P_{\mathcal{H}}} \ar@{<->}[r]^{} & E_Q^+ \ar@{<->}[r]^{c} &E_P^+\ar@{<->}[r]^{} &P^+ \ar@{<->}[ddl]^{P_{\mathcal{H}}}\\ 
 & F^Q\ar@{<->}[u]_{P_{\mathcal{H}}}\ar@{<->}[r]^{c}  &  F^P\ar@{<->}[u]^{P_{\mathcal{H}}} & \\
& Q \ar@{<->}[u]_{\mu} & P \ar@{<->}[u]^{\hat{\mu}}  &
}
\end{equation*}
\noindent
where $P_{\mathcal{H}}$ is the operator of projection onto $\mathcal{H}$, $\mu$ and $\hat{\mu}$ are the Markov kernels which characterize the smearing of $Q$ and $P$, i.e., $F^Q(\Delta)=\int\mu_{\Delta}(q)\,dQ_q$, $F^P(\Delta)=\int\hat{\mu}_{\Delta}(p)\,dP_p$ and $E_Q^+$, $E_P^+$ are the Naimark's dilations of $F^Q$ and $F^P$ respectively. The symbol $\xymatrix{ \ar@{<->}[r]^{c} &}$ denotes compatibility while the symbol $\xymatrix{ \ar@{<->}[r]_{} &}$ denotes the equivalence of $Q^+$ and its spectral measure $E_Q^+$.

The aim of the present paper is to show that the scheme we just outlined for the particular case of position and momentum observables can be generalized to the case of an arbitrary couple of self-adjoint operators. In particular, we show that the joint measurability of two POVMs $F_1$, $F_2$ which are smearings of two self-adjoint operators $A_1$ and $A_2$ is connected to the existence of two commuting self-adjoint dilations $A_1^+$ and $A_2^+$ of $A_1$ and $A_2$ respectively (see theorem \ref{self}). 

The key tools in the proof of the main result are: 1) theorem \ref{CNS} where we prove that two POVMs are jointly measurable if and only if they can be dilated (Naimark's dilation) to two jointly measurable PVMs, 2) the characterization of commutative POVMs by means of Feller Markov kernels \cite{B9,B12}, 3) some previous results on the relationships between the characterization of commutative POVMs by means of Feller Markov kernels and Naimark's dilation theorem \cite{B2,B3,B6}.

As we have already said, the aim of the present work is the analysis of the joint measurability of a couple of POVMs which are the smearings of a couple of self-adjoint operators. Such a situation is very common in physics and that motivates the present work. Anyway, it is worth remarking that the extension of our results to the joint measurability of more than two POVMs is problematic. Indeed, it was recently proved  \cite{Heunen} that the characterization of the joint measurability by means of Naimark's theorem (see theorem \ref{CNS}) cannot be extended to families of more than two POVMs.

The paper is organized as follows. In section 2 we outline the main definitions and properties of POVMs, introduce the concept of Markov kernel and show that each commutative POVMs $F$ is the smearing of a self-adjoint operator $A$, i.e., $F(\Delta)=\int\mu_{\Delta}(\lambda)\,dE^A_{\lambda}=\mu_{\Delta}(A)$ where, $E^A$ is the spectral measure corresponding to $A$ and $\mu$ is a Feller Markov kernel.

In section 3, we recall the connection between the operator $A$ such that $F(\Delta)=\mu_{\Delta}(A)$ and the operator $A^+$ corresponding to the Naimark's dilation $E^+$ of $F$.  

In section 4, we prove several equivalent characterizations of the joint measurability of  two POVMs.

In section 5, we analyze the joint measurability of two POVMs which are the smearings of two self-adjoint operators and prove the main result. Then, we focus on the position and momentum observables and show that it is a master example of our scheme.

In section 6, we apply theorem \ref{CNS} to the case of two effects $E$ and $F$ and prove that they are compatible if and only if they can be dilated to two commuting projection operators $E^+$ and $F^+$ respectively. Then, we prove a sufficient condition for the joint measurability of $E$ and $F$.

\section{Definition and main properties of POVMs}
\noindent
In what follows, we denote by $\mathcal{B}(X)$ the Borel $\sigma$-algebra of a topological space $X$ and by $\mathcal{L}_s(\mathcal{H})$ the space of all bounded self-adjoint linear operators acting in a Hilbert space $\mathcal{H}$ with scalar product $\langle\cdot,\cdot\rangle$. The subspace of positive operators is denoted by $\mathcal{L}^+_s(\mathcal{H})$.

\begin{definition}
\label{POV}
A Positive Operator Valued measure (for short, POVM) is a map $F:\mathcal{B}(X)\to\mathcal{L}^+_s(\mathcal{H})$
such that:
    \begin{equation*}
    F\big(\bigcup_{n=1}^{\infty}\Delta_n\big)=\sum_{n=1}^{\infty}F(\Delta_n).
    \end{equation*}
    \noindent 
 where, $\{\Delta_n\}$ is a countable family of disjoint
    sets in $\mathcal{B}(X)$ and the series converges in the weak operator topology. It is said to be normalized if 
\begin{equation*}   
    F(X)={\bf{1}}
\end{equation*}
\noindent
where $\textbf{1}$ is the identity operator.
\end{definition}    
\begin{definition}
    A POVM is said to be commutative if
    \begin{equation}
    \big[F(\Delta_1),F(\Delta_2)\big]={\bf{0}},\,\,\,\,\forall\,\Delta_1\,,\Delta_2\in\mathcal{B}(X).
    \end{equation}
    \end{definition}

   \begin{definition}
   A POVM is said to be orthogonal if $\Delta_1\cap\Delta_2=
    \emptyset$ implies
    \begin{equation}\label{orthogonal}
    F(\Delta_1)F(\Delta_2)=\mathbf{0}
    \end{equation}
\noindent
where $\textbf{0}$ is the null operator.
\end{definition}
\begin{definition}\label{PVM}
A Spectral measure or Projection Valued measure (for short, PVM) is an orthogonal, normalized POVM.
\end{definition}
\noindent
Let $E$ be a PVM. By equation (\ref{orthogonal}), 
$$\mathbf{0}=E(\Delta)E(X-\Delta)=E(\Delta)[\mathbf{1}-E(\Delta)]=E(\Delta)-E(\Delta)^2.$$ 
\noindent
We can then restate definition \ref{PVM} as follows.
\begin{definition}
A PVM $E$ is a POVM such that $E(\Delta)$ is a projection operator for each $\Delta\in\mathcal{B}(X)$.
\end{definition}

\noindent
In quantum mechanics, non-orthogonal normalized POVMs are also called \textbf{generalised} or \textbf{unsharp} observables while PVMs are called \textbf{standard} or \textbf{sharp} observables. 

\noindent
In what follows, we shall always refer to normalized POVMs and we shall use the term ``measurable'' for the Borel measurable functions.
For any vector $\psi\in\mathcal{H}$, the map
$$\langle F(\cdot)\psi,\psi\rangle \,:\,\mathcal{B}(X)\to [0,1],
\qquad
\Delta \mapsto \langle F(\Delta)\psi,\psi\rangle,$$
is a measure. 
In the following, we shall use the symbol $d\langle F_x\psi,\psi\rangle$ to mean integration with respect to $\langle F(\cdot)\psi,\psi\rangle$.
\noindent
A measurable function $f:N\subset X\to f(N)\subset\mathbb{R}$ is said to be almost everywhere (a.e.) one-to-one with respect to a POVM $F$ if it is one-to-one on a subset $N'\subset N$ such that $F(N-N')=\mathbf{0}$.
A function $f:X\to\mathbb{R}$ is bounded with respect to a POVM $F$, if it is equal to a bounded function $g$ a.e. with respect to $F$, that is, if $f=g$ a.e. with respect to the measure $\langle F(\cdot)\psi,\psi\rangle$,  $\forall \psi \in \mathcal{H}$.
\noindent
For any real, bounded and measurable function $f$ and for any POVM $F$, there is a unique \cite{Berberian} bounded self-adjoint operator $B\in\mathcal{L}_s(\mathcal{H})$ such that
\begin{equation}
\label{6}
\langle B\psi,\psi\rangle=\int f(x)d\langle F_{x}\psi,\psi\rangle,\quad\text{for each}\quad \psi\in\mathcal{H}.
\end{equation}
If equation (\ref{6}) is satisfied, we write $B=\int f(x)dF_{x}$ or $B=\int f(x)F(dx)$ equivalently. 
\noindent
\begin{definition}
The spectrum $\sigma(F)$ of a POVM $F$ is the closed set 
$$\{x\in{X}:\,F(\Delta)\neq\mathbf{0},\,\forall\Delta\,\text{open},\,x\in\Delta\}.$$
\end{definition}
\noindent

\noindent
By the spectral theorem \cite{Reed}, there is a one-to-one correspondence between  PVMs $E$ with spectrum in $\mathbb{R}$ and self-adjoint operators $B$, the correspondence being given by
$$B=\int\lambda dE^B_{\lambda}.$$
\noindent
Notice that the spectrum of $E^B$ coincides with the spectrum of the corresponding self-adjoint operator $B$. Moreover, in this case a functional calculus can be developed. Indeed, if $f:{\mathbb R}\to{\mathbb R}$ is a measurable real-valued function, we can define the self-adjoint operator \cite{Reed}
$$f(B)=\int f(\lambda) dE^B_{\lambda}.$$
\noindent
 If $f$ is bounded, then $f(B)$ is bounded \cite{Reed}.
\noindent

The following theorem gives a characterization of commutative POVMs as smearing of spectral measures with the smearing realized by means of Feller Markov kernels.  
\begin{definition}
Let $\Lambda$ be a topological space. A Markov kernel is a map $\mu:\Lambda\times\mathcal{B}(X)\to[0,1]$ such that,
\begin{itemize}
\item[1.] $\mu_{\Delta}(\cdot)$ is a measurable function for each $\Delta\in\mathcal{B}(X)$,
\item[2.] $\mu_{(\cdot)}(\lambda)$ is a probability measure for each $\lambda\in \Lambda$.
\end{itemize}
\end{definition}
 \begin{definition}
A Feller Markov kernel is a  Markov kernel $\mu_{(\cdot)}(\cdot):\Lambda\times\mathcal{B}(X)\to[0,1]$ such that the function 
$$G(\lambda)=\int_{X}f(x)\,d\mu_x(\lambda),\quad\lambda\in\Lambda$$
\noindent
is continuous and bounded whenever $f$ is continuous and bounded. 
\end{definition}
\noindent
In the following the symbol $\mathcal{A}^W(F)$ denotes the von Neumann algebra generated by the POVM $F$, i.e., the von Neumann algebra generated by the set $\{F(\Delta)\}_{\mathcal{B}(X)}$. Hereafter, we assume that $X$ is a Hausdorff, locally compact, second countable topological space.
\begin{theorem}[\cite{B9,B1}]\label{Markov}
A POVM $F:\mathcal{B}(X)\to\mathcal{L}^+_s(\mathcal{H})$ is commutative if and only if there exists a bounded self-adjoint operator $A=\int \lambda\,dE_\lambda$ with spectrum $\sigma(A)\subset[0,1]$, a subset $\Gamma\subset\sigma(A)$, $E(\Gamma)=\mathbf{1}$, a ring $\mathcal{R}$ which generates $\mathcal{B}(X)$ and a Feller Markov Kernel  $\mu:\Gamma\times\mathcal{B}(X)\to[0,1]$ such that 
\begin{enumerate}
\item[1)] $F(\Delta)=\int_{\Gamma}\mu_{\Delta}(\lambda)\,dE_{\lambda},\quad\Delta\in\mathcal{B}(X)$.
\item[2)] $\mathcal{A}^W(F)=\mathcal{A}^W(A)$.
\item[3)] $\mu$ separates the points in $\Gamma$.
\item[4)] $\mu_{\Delta}$ is continuous for each $\Delta\in\mathcal{R}$.
\end{enumerate}
\end{theorem}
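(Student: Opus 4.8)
Sufficiency is easy and uses none of items 2--4: if $F(\Delta)=\int_\Gamma\mu_\Delta(\lambda)\,dE_\lambda$ for some bounded self-adjoint $A=\int\lambda\,dE_\lambda$ and some Feller Markov kernel $\mu$, then each $F(\Delta)$ equals $g_\Delta(A)$, where $g_\Delta$ is the bounded Borel function equal to $\mu_\Delta$ on $\Gamma$ and to $0$ off $\Gamma$; hence all the $F(\Delta)$ lie in the abelian von Neumann algebra $\mathcal{A}^W(A)$ and commute, so $F$ is commutative. The substance is the converse, which I would organize in three steps. Assume $F$ commutative, so that $\{F(\Delta)\}_{\mathcal B(X)}$ is a commuting family of bounded self-adjoint operators and $\mathcal{A}^W(F)$ is abelian.

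\emph{Step 1: the generating operator $A$.} Since $X$ is second countable, fix a countable ring $\mathcal R$ generating $\mathcal B(X)$. A monotone-class argument, using that $F$ is $\sigma$-additive in the weak operator topology and that a von Neumann algebra is weakly closed, shows that every $F(\Delta)$, $\Delta\in\mathcal B(X)$, lies in the von Neumann algebra generated by the \emph{countable} commuting family $\{F(\Delta):\Delta\in\mathcal R\}$; thus $\mathcal{A}^W(F)$ is generated by this countable family. Bundling it into a single bounded self-adjoint operator by a norm-convergent series with suitably chosen coefficients (so that the spectral projections of each generator are Borel functions of the sum), followed by an affine rescaling, produces $A=\int\lambda\,dE_\lambda$ with $\sigma(A)\subset[0,1]$ and $\mathcal{A}^W(A)=\mathcal{A}^W(F)$, which is item 2. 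Write $E=E^A$ for the spectral measure of $A$.

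\emph{Step 2: the kernel and the integral representation.} By the spectral theorem every element of $\mathcal{A}^W(A)$ has the form $\int g(\lambda)\,dE_\lambda$ with $g\in L^\infty(\sigma(A),E)$, unique up to an $E$-null modification; choose, for each $\Delta\in\mathcal R$, a representative $g_\Delta$ with $F(\Delta)=\int g_\Delta\,dE_\lambda$. Translating the POVM axioms for $F$ (positivity, $F(\Delta)\le\mathbf 1$, $F(X)=\mathbf 1$, and $\sigma$-additivity restricted to members of $\mathcal R$) into $E$-a.e.\ identities among the $g_\Delta$ — and noting that only \emph{countably} many such identities occur — one obtains a Borel set $\Gamma\subset\sigma(A)$ with $E(\Gamma)=\mathbf 1$ on which they all hold simultaneously. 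Hence for each $\lambda\in\Gamma$ the set function $\Delta\mapsto g_\Delta(\lambda)$ is a $[0,1]$-valued premeasure of total mass $1$ on $\mathcal R$, and Carath\'eodory extension defines a probability measure $\mu_{(\cdot)}(\lambda)$ on $\mathcal B(X)$. A further monotone-class argument (pointwise monotone convergence in $\lambda$) gives Borel measurability of $\lambda\mapsto\mu_\Delta(\lambda)$ for every $\Delta$, so $\mu$ is a Markov kernel; and $F(\Delta)=\int_\Gamma\mu_\Delta(\lambda)\,dE_\lambda$ holds on $\mathcal R$ by construction, hence on all of $\mathcal B(X)$ because both sides are $\sigma$-additive and agree on the generating ring $\mathcal R$. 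This is item 1.

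\emph{Step 3: Feller property, continuity, separation --- and the main obstacle.} Taking $\mathcal R$ to be the ring generated by a countable base of relatively compact open sets, one would use the inner/outer regularity of the measures $\mu_{(\cdot)}(\lambda)$ together with the continuity of $\lambda\mapsto\langle h(A)\psi,\psi\rangle$ for bounded continuous $h$ in order to replace each $g_\Delta$, $\Delta\in\mathcal R$, by a version continuous on $\Gamma$; this yields item 4 and, by approximation from $\mathcal R$ to $\mathcal B(X)$, the Feller property of $\mu$. Finally, if $\mu_\Delta(\lambda_1)=\mu_\Delta(\lambda_2)$ for all $\Delta$ on a set of pairs of positive $E$-measure, then $A$ restricted to the corresponding block would not lie in $\mathcal{A}^W(F)$, contradicting item 2; so this set is $E$-null, and deleting a further $E$-null subset of $\Gamma$ secures item 3. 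The measure-theoretic bookkeeping of Steps 1--2 is routine once one restricts to a countable generating ring and invokes Carath\'eodory extension; the genuinely delicate point is Step 3 — upgrading the a.e.-defined functions $g_\Delta$ to \emph{continuous} representatives on a generating ring, which is precisely where regularity of measures and the specific topology of $X$ (Hausdorff, locally compact, second countable) are indispensable and where the argument is least mechanical.
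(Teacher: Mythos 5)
The paper does not actually prove this theorem: it is imported from \cite{B9,B1}, so there is no internal proof to compare against. Measured against the argument of those references, your overall strategy is the right one (a single bounded self-adjoint generator of the abelian algebra $\mathcal{A}^W(F)$, then a disintegration of the $L^\infty$ densities $g_\Delta$ into a kernel), and your sufficiency direction and Step 1 are sound.

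The genuine gap is in Step 2, at the Carath\'eodory extension. You justify that $\Delta\mapsto g_\Delta(\lambda)$ is a premeasure on $\mathcal{R}$ by saying that the identities to be enforced are ``only countably many''. That is true for finite additivity on a countable ring, but not for countable additivity: the family of sequences of pairwise disjoint members of $\mathcal{R}$ whose union again lies in $\mathcal{R}$ has the cardinality of the continuum, so you cannot discard one $E$-null set per identity and intersect. This is the classical obstruction in every disintegration argument, and without resolving it the extension step does not go through. The standard repair --- and the one the cited references rely on --- is to bring the topology in \emph{before} extending: take $\mathcal{R}$ generated by relatively compact sets, prove inner regularity of the finitely additive set functions by compact sets (a compact-class argument), and deduce countable additivity from that; equivalently, construct the measures $\mu_{(\cdot)}(\lambda)$ from a positive linear functional on $C_c(X)$ via the Riesz--Markov theorem. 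You invoke regularity only afterwards, in Step 3, and only as a device for continuity, so the logical order of your argument is broken. Relatedly, Step 3 asserts rather than proves the two hardest conclusions: that the a.e.-defined $g_\Delta$ admit versions that are \emph{simultaneously} continuous on $\Gamma$ for all $\Delta\in\mathcal{R}$, and that the resulting kernel is Feller, i.e.\ that $\lambda\mapsto\int f(x)\,d\mu_x(\lambda)$ is continuous and bounded for every bounded continuous $f$. Those are precisely the points the quoted theorem adds over the older ``weak Markov kernel'' versions of this result, and leaving them at the level of ``one would use regularity'' means items 3) and 4) are not yet established.
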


\noindent
Item 1) in theorem \ref{Markov} expresses $F$ as a smearing of $E$. Item 3) means that, for any couple of points $\lambda_1,\lambda_2\in\Gamma$, there is a $\Delta\in\mathcal{B}(X)$ such that $\mu_{\Delta}(\lambda_1)\neq\mu_{\Delta}(\lambda_2)$.

\begin{definition}
The operator $A$ introduced in theorem \ref{Markov} is called the sharp version of $F$.  
\end{definition}
\begin{theorem}\cite{B0,B1,B4}\label{bijections}
 The sharp version $A$ is unique up to almost everywhere bijections.
\end{theorem}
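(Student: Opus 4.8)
The plan is to reduce everything to item~2) of Theorem~\ref{Markov}, namely that any sharp version of $F$ generates the same von Neumann algebra as $F$. Let $A=\int\lambda\,dE_\lambda$ and $A'=\int\lambda\,dE'_\lambda$ be two sharp versions of $F$. By item~2), $\mathcal{A}^W(A)=\mathcal{A}^W(F)=\mathcal{A}^W(A')$. Now I would invoke the standard fact (a consequence of the spectral theorem \cite{Reed}, read off from the multiplication-operator model, where it amounts to identifying $\{B\}''$ with the diagonalizable operators) that for a bounded self-adjoint operator $B$ the von Neumann algebra $\mathcal{A}^W(B)$ generated by the spectral projections $E^B(\Delta)$ coincides with the algebra $\{\phi(B):\phi\ \text{bounded Borel}\}$ of bounded Borel functions of $B$. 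Applying this to $A$, and using $A'=\int\lambda\,dE'_\lambda\in\mathcal{A}^W(A')=\mathcal{A}^W(A)$, there is a bounded Borel function $g:\mathbb{R}\to\mathbb{R}$ with
\[
A'=g(A)=\int g(\lambda)\,dE_\lambda ,
\]
and, symmetrically, a bounded Borel function $h:\mathbb{R}\to\mathbb{R}$ with $A=h(A')=\int h(\mu)\,dE'_\mu$.

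The second step is to show that $g$ and $h$ are mutually inverse almost everywhere. Substituting one relation into the other gives $A=h(g(A))=\int (h\circ g)(\lambda)\,dE_\lambda$, and by uniqueness of the operator associated to a bounded function via (\ref{6}) this forces $h\circ g=\mathrm{id}$ $E$-almost everywhere: there is a Borel set $\Gamma_0$ with $E(\Gamma_0)=\mathbf{1}$ and $h(g(\lambda))=\lambda$ for every $\lambda\in\Gamma_0$. In particular $g|_{\Gamma_0}$ is injective, so $g$ is a.e. one-to-one with respect to $E$; symmetrically $g\circ h=\mathrm{id}$ on a Borel set $\Delta_0$ with $E'(\Delta_0)=\mathbf{1}$, so $h$ is a.e. one-to-one with respect to $E'$.

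The last step is to match the two spectra. Since $A'=g(A)$, the spectral measure of $A'$ is the push-forward $E'(\cdot)=E\big(g^{-1}(\cdot)\big)$; because $g^{-1}\big(\mathbb{R}\setminus g(\Gamma_0)\big)\subset\mathbb{R}\setminus\Gamma_0$, this yields $E'\big(g(\Gamma_0)\big)=\mathbf{1}$, where $g(\Gamma_0)$ is Borel since $g|_{\Gamma_0}$ is an injective Borel map (Lusin--Souslin theorem; equivalently one works throughout in the standard Borel spaces $\sigma(A),\sigma(A')$). Replacing $\Gamma_0$ by $\Gamma_0\cap g^{-1}(\Delta_0)$ and $\Delta_0$ by its image, one obtains Borel sets of full spectral measure, $E(\Gamma_0)=\mathbf{1}$ and $E'(g(\Gamma_0))=\mathbf{1}$, on which $g:\Gamma_0\to g(\Gamma_0)$ is a bijection with inverse $h$ and $A'=g(A)$, $A=h(A')$. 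This is precisely the assertion that $A$ and $A'$ agree up to an almost everywhere bijection; a detailed argument along these lines appears in \cite{B0,B1,B4}.

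The only substantive ingredient is the identification in the first step of $\mathcal{A}^W(B)$ with the bounded Borel functions of $B$; granting that, the rest is bookkeeping. The one point requiring care is turning the $E$-a.e.\ defined inverse $h$ into a genuine measurable almost-everywhere bijection onto a set of full $E'$-measure, which is exactly what the descriptive-set-theoretic remark above handles. It is worth noting that only item~2) of Theorem~\ref{Markov} is used: the separation properties 3)--4) enter the selection of a distinguished representative of $A$ but are irrelevant to this uniqueness-up-to-bijection statement.
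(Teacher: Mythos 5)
The paper does not actually prove Theorem~\ref{bijections}: it is stated with citations to \cite{B0,B1,B4}, where the uniqueness is obtained from the explicit construction of the sharp reconstruction and the separation property of the Markov kernel. Your argument is therefore a genuinely independent route, and it is essentially correct: reducing everything to item~2) of Theorem~\ref{Markov} and to the fact that two bounded self-adjoint operators generate the same von Neumann algebra if and only if each is a bounded Borel function of the other is a clean way to get ``unique up to a.e.\ bijections,'' and your descriptive-set-theoretic bookkeeping (Lusin--Souslin to make $g(\Gamma_0)$ Borel, then intersecting with $g^{-1}(\Delta_0)$) correctly upgrades the a.e.\ inverse to a genuine bijection between sets of full $E$- and $E'$-measure. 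Two points of justification deserve tightening. First, the identification $\mathcal{A}^W(B)=\{\phi(B):\phi\ \text{bounded Borel}\}$ is a theorem only for separable $\mathcal{H}$ (it can fail otherwise); the paper never states separability, so you should either assume it explicitly or note that it is implicit in the quantum-mechanical setting. Second, your deduction of $h\circ g=\mathrm{id}$ $E$-a.e.\ from $\int\lambda\,dE_\lambda=\int(h\circ g)(\lambda)\,dE_\lambda$ is attributed to the uniqueness in~(\ref{6}), but~(\ref{6}) gives uniqueness of the \emph{operator} determined by a function, not of the function determined by the operator; the correct justification is the multiplicativity of the PVM functional calculus, namely $\bigl\|\int(f-f')\,dE\,\psi\bigr\|^2=\int|f-f'|^2\,d\langle E_\lambda\psi,\psi\rangle$, which vanishes for all $\psi$ and forces $f=f'$ $E$-a.e.\ (this step genuinely uses that $E$ is projection-valued and would fail for a mere POVM). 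With those two repairs the proof stands, and your closing observation is also right: items 3)--4) of Theorem~\ref{Markov} play no role in uniqueness up to bijections, only in singling out a distinguished representative.
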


\begin{remark}\label{information}
Theorem \ref{Markov} defines a relationship between a commutative POVM $F$ and its sharp version $A$ which can be formalized by the introduction of an equivalence relation between $A$ and $F$ (see Ref. \cite{B5}). 
\end{remark}

\section{Sharp version as projection of a Naimark's operator}
\noindent 
In the present section, we use Naimark's dilation theorem in order to characterize the sharp version of a commutative POVM. First, we recall the Naimark's dilation theorem.  

\begin{theorem}[Naimark \cite{Naimark,Riesz,Akhiezer,Schroeck1, Paulsen}]
\label{Neumark}
Let $F$ be a POVM. Then, there exist an extended Hilbert space $\mathcal{H}^+$ and a PVM $E^+$ on $\mathcal{H}^+$ such that
$$F(\Delta)\psi=PE^+(\Delta)\psi,\quad \forall \psi\in\mathcal{H}$$
where $P$ is the operator of projection onto $\mathcal{H}$.
\end{theorem}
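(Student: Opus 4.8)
The plan is to carry out the standard ``Kolmogorov decomposition'' construction of the minimal dilation. First I would introduce the pre-inner-product space $\mathcal{V}$ consisting of all $\mathcal{H}$-valued simple functions on $X$, i.e.\ finite sums $\phi=\sum_{i}\chi_{\Delta_i}\otimes\psi_i$ with $\Delta_i\in\mathcal{B}(X)$ and $\psi_i\in\mathcal{H}$ (equivalently, the algebraic tensor product of the vector space of simple functions with $\mathcal{H}$), and equip it with the sesquilinear form
$$\Big\langle \sum_i\chi_{\Delta_i}\otimes\psi_i\,,\,\sum_j\chi_{\Delta'_j}\otimes\psi'_j\Big\rangle_{+}:=\sum_{i,j}\langle F(\Delta_i\cap\Delta'_j)\psi_i,\psi'_j\rangle .$$
The first task is to check that this is well defined (independent of the chosen representation of a simple function) and positive semidefinite. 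Both follow by passing to a common refinement: any finite family of Borel sets generates a finite Boolean algebra with atoms $\{B_k\}$, each $\chi_{\Delta_i}$ is a sum of the $\chi_{B_k}$, and rewriting $\phi=\sum_k\chi_{B_k}\otimes\eta_k$ one obtains $\langle\phi,\phi\rangle_{+}=\sum_k\langle F(B_k)\eta_k,\eta_k\rangle\ge 0$ because each $F(B_k)\in\mathcal{L}^+_s(\mathcal{H})$; the analogous computation on mixed sums gives well-definedness. Cauchy--Schwarz then shows $\mathcal{N}:=\{\phi:\langle\phi,\phi\rangle_+=0\}$ is a linear subspace, and I would let $\mathcal{H}^{+}$ be the Hilbert space completion of $\mathcal{V}/\mathcal{N}$.

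Next I would define the candidate spectral measure on $\mathcal{V}$ by $E^{+}(\Delta)\big(\chi_{\Delta'}\otimes\psi\big):=\chi_{\Delta\cap\Delta'}\otimes\psi$, extended linearly. One checks in turn that $E^{+}(\Delta)$ preserves $\mathcal{N}$ (so it descends to the quotient) and is contractive for $\langle\cdot,\cdot\rangle_+$ (so it extends to a bounded operator on $\mathcal{H}^{+}$), that $E^{+}(\Delta)^{*}=E^{+}(\Delta)=E^{+}(\Delta)^{2}$ so each $E^{+}(\Delta)$ is an orthogonal projection, that $E^{+}(X)=\mathbf{1}$, and that $\Delta\mapsto E^{+}(\Delta)$ is $\sigma$-additive in the weak operator topology --- the last point being inherited directly from the $\sigma$-additivity of $F$, tested on the total set of vectors of the form $\chi_{\Delta'}\otimes\psi$. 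Hence $E^{+}$ is a PVM on $\mathcal{H}^{+}$ in the sense of Definition~\ref{PVM}.

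Finally I would produce the embedding. The map $V:\mathcal{H}\to\mathcal{H}^{+}$, $V\psi:=[\chi_{X}\otimes\psi]$, is isometric since $\langle V\psi,V\psi'\rangle_{+}=\langle F(X)\psi,\psi'\rangle=\langle\psi,\psi'\rangle$; identifying $\mathcal{H}$ with the closed subspace $V(\mathcal{H})\subset\mathcal{H}^{+}$ and letting $P$ be the orthogonal projection onto it, for all $\psi,\psi'\in\mathcal{H}$ one computes
$$\langle PE^{+}(\Delta)V\psi,V\psi'\rangle_{+}=\langle E^{+}(\Delta)(\chi_{X}\otimes\psi),\chi_{X}\otimes\psi'\rangle_{+}=\langle \chi_{\Delta}\otimes\psi,\chi_{X}\otimes\psi'\rangle_{+}=\langle F(\Delta)\psi,\psi'\rangle ,$$
and since $PE^{+}(\Delta)V\psi\in V(\mathcal{H})$ this forces $PE^{+}(\Delta)\psi=F(\Delta)\psi$ under the identification, which is the assertion. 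I expect the only genuinely delicate step to be the first one --- verifying that $\langle\cdot,\cdot\rangle_{+}$ is consistently defined and positive semidefinite, which is exactly where the positivity of $F$ is used; once $\mathcal{H}^{+}$ is constructed, checking that $E^{+}$ is a PVM and that $PE^{+}V$ reproduces $F$ is routine bookkeeping. An alternative, shorter route is to regard $f\mapsto\int f\,dF$ as a unital positive, hence completely positive, map on the commutative $C^{*}$-algebra of bounded measurable functions on $X$, apply Stinespring's dilation theorem, and set $E^{+}(\Delta):=\pi(\chi_{\Delta})$ for the resulting $*$-representation $\pi$; this trades the positivity computation for the work of checking that $\pi$ is sufficiently normal to make $\Delta\mapsto E^{+}(\Delta)$ $\sigma$-additive.
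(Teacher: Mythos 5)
The paper does not actually prove this statement --- it is quoted as a classical theorem with references to Naimark, Riesz--Nagy, Akhiezer--Glazman and Paulsen, and your sketch is precisely the standard minimal-dilation (Kolmogorov/Stinespring-type) construction given in those references, so it follows the ``same approach'' in the only meaningful sense; the positivity-via-common-refinement step, the descent of $E^{+}(\Delta)$ to the quotient, the isometry $V\psi=[\chi_X\otimes\psi]$ (which uses the paper's standing normalization $F(X)=\mathbf{1}$), and the final computation of $PE^{+}(\Delta)V$ are all correct. The only point worth making explicit beyond your sketch is in the $\sigma$-additivity step: weak convergence of $\sum_{n\le N}E^{+}(\Delta_n)$ tested on the total set of vectors $\chi_{\Delta'}\otimes\psi$ extends to all of $\mathcal{H}^{+}$ because these partial sums are themselves orthogonal projections (by the multiplicativity $E^{+}(\Delta)E^{+}(\Delta')=E^{+}(\Delta\cap\Delta')$ on disjoint sets), hence uniformly bounded in norm by $1$.
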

\noindent
Notice that Naimark's theorem assures that to each POVM $F$ acting on $\mathcal{H}$ there corresponds a PVM $E^+$ acting on an extended Hilbert space $\mathcal{H}^+$ while, theorem \ref{Markov} assures that to each commutative POVM $F$ there corresponds a PVM $E$ (the sharp version of $F$) acting on $\mathcal{H}$. The following theorem establishes a relationship between $E^+$ and $E$ in the case of real POVMs.

\begin{definition}\label{one-to-one}
Let $A$ and $B$ be self-adjoint operators. Whenever there exists a one-to-one measurable function $f$ such that $A=f(B)$, we say that $A$ is equivalent to $B$ and write $A\leftrightarrow B$. 
\end{definition}

\begin{theorem}\cite{B2,B3,B5}\label{Pro}
Let $f$ be a bounded measurable real valued function. Let $A^+$ be the self-adjoint operator corresponding to a Naimark's dilation $E^+$ of $F$. Then,
$$F(f):=\int f(t)\,d F_t=Pf(A^+)P.$$
\end{theorem}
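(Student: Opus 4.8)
The plan is to verify the identity $F(f) = P f(A^+) P$ by reducing it to the defining property of the Naimark dilation, namely $F(\Delta)\psi = P E^+(\Delta)\psi$ for all $\psi \in \mathcal{H}$, together with the spectral theorem applied to $A^+$ and its spectral measure $E^+$. First I would recall that, since $A^+ = \int t\, dE^+_t$ is the self-adjoint operator associated with the PVM $E^+$ (which has spectrum in $\mathbb{R}$), the functional calculus gives $f(A^+) = \int f(t)\, dE^+_t$ for any bounded measurable real-valued $f$; in particular $f(A^+)$ is a bounded self-adjoint operator on $\mathcal{H}^+$. The goal identity is an identity between bounded self-adjoint operators on $\mathcal{H}$, so by the uniqueness statement following equation (\ref{6}) it suffices to check the scalar (weak) form, i.e. that $\langle F(f)\psi,\psi\rangle = \langle P f(A^+) P \psi, \psi\rangle$ for every $\psi \in \mathcal{H}$.

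Next I would fix $\psi \in \mathcal{H}$ and compute both sides as integrals against measures on $\mathbb{R}$. On the left, by the definition of $F(f)$ in (\ref{6}) we have $\langle F(f)\psi,\psi\rangle = \int f(t)\, d\langle F_t\psi,\psi\rangle$. On the right, since $P$ is the orthogonal projection of $\mathcal{H}^+$ onto $\mathcal{H}$ and $\psi \in \mathcal{H}$, we have $P\psi = \psi$, so $\langle P f(A^+) P\psi,\psi\rangle = \langle f(A^+)\psi,\psi\rangle = \int f(t)\, d\langle E^+_t\psi,\psi\rangle$, using the spectral representation of $f(A^+)$. Thus the identity reduces to the equality of the two Borel measures $\Delta \mapsto \langle F(\Delta)\psi,\psi\rangle$ and $\Delta \mapsto \langle E^+(\Delta)\psi,\psi\rangle$ on $\mathcal{B}(\mathbb{R})$. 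But this equality is immediate from Naimark's theorem (Theorem \ref{Neumark}): for every $\Delta$ and every $\psi \in \mathcal{H}$,
\begin{equation*}
\langle F(\Delta)\psi,\psi\rangle = \langle P E^+(\Delta)\psi,\psi\rangle = \langle E^+(\Delta)\psi, P\psi\rangle = \langle E^+(\Delta)\psi,\psi\rangle,
\end{equation*}
again because $P$ is self-adjoint and $P\psi=\psi$. Since two finite Borel measures that agree on all Borel sets are equal, and $f$ is bounded and measurable hence integrable against both, the two integrals coincide, which gives $\langle F(f)\psi,\psi\rangle = \langle P f(A^+) P\psi,\psi\rangle$ for all $\psi$, and therefore $F(f) = P f(A^+) P$ by uniqueness.

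I do not expect a serious obstacle here; the statement is essentially a bookkeeping consequence of Naimark's theorem plus the spectral calculus. The one point that deserves a little care is making sure the reduction from the operator identity to the scalar identity is legitimate — this relies on the uniqueness clause attached to (\ref{6}), which applies because $f$ is real, bounded and measurable and both $F(f)$ and $Pf(A^+)P$ are bounded self-adjoint operators whose quadratic forms are given by integration against the POVM $F$. A secondary subtlety is that one should note $E^+$ has spectrum in $\mathbb{R}$ so that the association $A^+ \leftrightarrow E^+$ via the spectral theorem and the functional calculus $f(A^+) = \int f\, dE^+$ are available; this is part of what it means for $A^+$ to be "the self-adjoint operator corresponding to $E^+$" in the statement. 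If one wanted to treat unbounded $A^+$ (when $F$ has unbounded spectrum), the same argument works verbatim as long as $f$ remains bounded, since then $f(A^+)$ is still bounded.
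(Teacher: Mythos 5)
Your argument is correct: reducing the operator identity to the equality of quadratic forms via the uniqueness clause of equation (\ref{6}) (i.e.\ polarization for bounded self-adjoint operators), and then identifying the two scalar measures $\langle F(\cdot)\psi,\psi\rangle$ and $\langle E^+(\cdot)\psi,\psi\rangle$ through $P\psi=\psi$ and the self-adjointness of $P$, is exactly the standard route. The paper itself does not reprove Theorem \ref{Pro} but cites \cite{B2,B3,B5}; your proof matches the expected argument, including the correct observation that $f(A^+)$ remains bounded even when $A^+$ is unbounded.
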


In the case $f$ is unbounded, the domain of definition of the operators must be taken into account \cite{Lahti}.

\begin{theorem}[\cite{B6}]\label{Be2} 
Let $F:\mathcal{B}(\mathbb{R})\to\mathcal{L}^+_s(\mathcal{H})$ be a commutative POVM such that the operators in the range of $F$ are discrete\footnote{$F(\Delta)$ is discrete if it has a complete set of eigenvectors.}. Let $A$ be the sharp reconstruction of $F$ and $A^+=\int \lambda\, dE^+_{\lambda}$ the Naimark's operator corresponding to the Naimark's dilation $E^+$. Then, there are two bounded, one-to-one functions $f$ and $h$ such that
$$h(A)=\int f(t) \,d F_t=P^+f(A^+)P.$$ 
\end{theorem}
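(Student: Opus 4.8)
The plan is to combine the two characterizations of the sharp version of $F$ that the paper has already assembled: the Feller Markov kernel picture from Theorem~\ref{Markov} (items 1--4), which expresses $F(\Delta) = \int_\Gamma \mu_\Delta(\lambda)\,dE_\lambda$ with $A = \int\lambda\,dE_\lambda$ and $\mu$ point-separating, and the Naimark picture from Theorem~\ref{Pro}, which says $F(f) = Pf(A^+)P$ for every bounded measurable $f$. The goal is to produce bounded injective $f,h$ with $h(A) = F(f) = Pf(A^+)P$. So the work is really to locate a single bounded measurable function $f$ on $\sigma(A^+)$ whose ``$F$-average'' $F(f)=\int f(t)\,dF_t$ is itself a bounded injective function of $A$.

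\textbf{Step 1: reduce to finding a good $f$.} Once $f$ is chosen, Theorem~\ref{Pro} immediately gives $F(f)=Pf(A^+)P$, so the only thing to verify is that $F(f)=h(A)$ for some bounded injective $h$. Since $A$ is discrete (its spectral projections $E(\{\lambda\})$ for $\lambda$ an eigenvalue sum to $\mathbf 1$ — here the discreteness hypothesis on the range of $F$ is used, via item~2) $\mathcal A^W(F)=\mathcal A^W(A)$, to force $A$ discrete too), writing $\Gamma=\{\lambda_1,\lambda_2,\dots\}$ for the eigenvalues we get from item~1) that
\[
F(f)=\int_\Gamma\Big(\int_X f(t)\,d\mu_t(\lambda)\Big)\,dE_\lambda=\sum_i G_f(\lambda_i)\,E(\{\lambda_i\}),
\qquad G_f(\lambda):=\int_X f(t)\,d\mu_t(\lambda).
\]
This is automatically of the form $h(A)$ with $h(\lambda_i):=G_f(\lambda_i)$; the content is to choose $f$ bounded so that $\lambda\mapsto G_f(\lambda)$ is injective on the countable set $\Gamma$, and then to note that an injective function on $\sigma(A)$ can be adjusted to a globally bounded injective $h$ without changing its values on $\Gamma$.

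\textbf{Step 2: construct $f$ using point-separation.} By item~3) of Theorem~\ref{Markov}, for each pair $i\neq j$ there is a Borel set $\Delta_{ij}$ with $\mu_{\Delta_{ij}}(\lambda_i)\neq\mu_{\Delta_{ij}}(\lambda_j)$. The standard trick is to take $f=\sum_k c_k\,\chi_{\Delta_k}$ a bounded (say, $|c_k|\le 2^{-k}$, convergent) linear combination of indicator functions of a countable family $\{\Delta_k\}$ chosen to include all the separating sets $\Delta_{ij}$, so that $G_f(\lambda)=\sum_k c_k\,\mu_{\Delta_k}(\lambda)$. One then argues that for generic (or carefully chosen) coefficients $c_k$ this series separates all points of $\Gamma$: if $G_f(\lambda_i)=G_f(\lambda_j)$ for all choices in an appropriate parametrized family one derives $\mu_{\Delta_k}(\lambda_i)=\mu_{\Delta_k}(\lambda_j)$ for every $k$, contradicting the presence of $\Delta_{ij}$. (A Baire-category or countable-avoidance argument over the coefficient sequences makes a single good $f$ exist; since $\Gamma$ is countable only countably many ``bad'' coefficient hyperplanes must be dodged.) This $f$ is bounded because $\sum|c_k|<\infty$ and $0\le\mu_{\Delta_k}\le 1$, and measurable as a uniform limit of simple functions.

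\textbf{Step 3: assemble.} With such an $f$, set $h(\lambda_i)=G_f(\lambda_i)$ extended to an arbitrary bounded injective Borel function on $\sigma(A)$ agreeing with these values on $\Gamma$ (possible since $\Gamma$ is countable and $h|_\Gamma$ is already injective; just spread the remaining values into a disjoint bounded set). Then $h(A)=\sum_i G_f(\lambda_i)E(\{\lambda_i\})=F(f)$, and Theorem~\ref{Pro} gives $F(f)=Pf(A^+)P$, i.e. $h(A)=\int f(t)\,dF_t=P^+f(A^+)P$ — noting $P$ and $P^+$ denote the same projection onto $\mathcal H$. \textbf{The main obstacle} I expect is Step~2: guaranteeing that a \emph{single} bounded $f$ simultaneously separates all (countably many) pairs in $\Gamma$, rather than just separating each pair individually. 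The cleanest route is the coefficient-genericity/Baire argument sketched above, but one must be careful that $\Gamma$ really is countable — which is where discreteness of the range of $F$ (hence of $A$) is essential — and that the resulting $h$ can be taken injective on all of $\sigma(A)$, not merely on $\Gamma$; both are routine once the countability is in hand.
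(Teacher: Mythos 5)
You should first note that the paper does not actually prove Theorem~\ref{Be2}: it is imported verbatim from Ref.~\cite{B6}, so there is no internal proof to compare against. Judged on its own terms, your reconstruction has the right architecture --- combine the Markov-kernel representation of Theorem~\ref{Markov} with the identity $F(f)=Pf(A^+)P$ of Theorem~\ref{Pro}, and manufacture a single bounded separating function $f=\sum_k c_k\chi_{\Delta_k}$ by a countable-avoidance (Baire) argument over the coefficient sequence; this is indeed the spirit of the ``infinite sequences of linear functionals'' machinery of the cited reference, and your reduction of the problem to the injectivity of $G_f$ on the countable set $\Gamma$ is sound.

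There is, however, a genuine gap: the theorem asserts that \emph{both} $h$ \emph{and} $f$ are one-to-one, and you only arrange injectivity of $h=G_f$ on $\Gamma$. Your $f=\sum_k c_k\chi_{\Delta_k}$ is constant on every atom of the algebra generated by $\{\Delta_k\}$, so it is not injective unless you additionally require the $\Delta_k$ to separate the points of $X=\mathbb{R}$ \emph{and} choose the coefficients so that distinct $0$--$1$ sequences $(\chi_{\Delta_k}(t))_k$ yield distinct sums (e.g.\ $c_k=3^{-k}$, a base-$3$ encoding); with your constraint $|c_k|\le 2^{-k}$ and ``generic'' coefficients this is not automatic, and you never address it. The injectivity of $f$ is not decorative: it is precisely what the paper uses after the theorem (Definition~\ref{one-to-one}, the correspondence $A\leftrightarrow Pr\,A^+$, the corollary on compatible self-adjoint dilations, and Theorem~\ref{self}) to promote $f(A^+)$ to an operator \emph{equivalent} to $A^+$, i.e.\ to a genuine self-adjoint dilation. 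You would also need to check that the two requirements --- $f$ injective on $X$ and $G_f$ injective on $\Gamma$ --- can be met by a single coefficient sequence; this is plausible (the bad sequences for $G_f$ form a countable union of proper closed hyperplanes, which a rich enough family of fast-decaying sequences can dodge), but it is an extra argument, not a remark. The remaining steps of your outline (the Fubini interchange for bounded $f$, and the observation that a.e.\ injectivity of $h$ suffices because $E(\Gamma)=\mathbf{1}$) are fine.
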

\noindent
Theorem \ref{Be2} establishes that $h(A)$ is the projection of $f(A^+)$ with $h$ and $f$ one-to-one. According to definition \ref{one-to-one} we can say that a correspondence between $A$ and $A^+$ is established as well. We denote such a correspondence by $A\leftrightarrow Pr A^+$.

\section{Conditions for the joint measurability}

In the present section we recall the definition and some of the main theorems on the joint measurability of two POVMs.  Then, we use Naimark's dilation theorem in order to prove several equivalent necessary and sufficient conditions for the joint measurability.

\begin{definition}
Two POVMs $F_1:\mathcal{B}(X_1)\to\mathcal{L}^+_s(\mathcal{H})$, $F_2:\mathcal{B}(X_2)\to\mathcal{L}^+_s(\mathcal{H})$ are compatible (or jointly measurable) if they are the marginals of a joint POVM $F:\mathcal{B}(X_1\times X_2)\to\mathcal{L}^+_s(\mathcal{H})$.  
\end{definition}

\noindent
We recall that the symbol $\mathcal{B}(X_1\times X_2)$ denotes the product $\sigma$-algebra generated by the family of sets $\{\Delta_1\times\Delta_2\,\,:\,\,\Delta_1\in\mathcal{B}(X_1),\,\Delta_2\in\mathcal{B}(X_2)\}$.

Two POVMs $F_1$ and $F_2$ commute if $[F_1(\Delta_1),F_2(\Delta_2)]=\mathbf{0}$, for each $\Delta_1\in\mathcal{B}(X_1)$ and $\Delta_1\in\mathcal{B}(X_2)$. In the following, the commutativity of two POVMs $F_1$ and $F_1$ is denoted by the symbol $[F_1,F_2]=\mathbf{0}$.

If $E_1$ and $E_2$ are two PVMs, we have the following characterizations of the compatibility.
\begin{theorem}[\cite{Lahti1}]\label{PVM}
Let $E_1$ and $E_2$ be two PVMs. The following conditions are equivalent:
\begin{enumerate}
 \item[i)] they are compatible,
\item[ii)] they are the marginals of a joint PVM $E$, 
\item[iii)] they commute.
\end{enumerate}
\end{theorem}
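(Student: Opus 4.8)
The plan is to prove the chain of implications (ii) $\Rightarrow$ (i) $\Rightarrow$ (iii) $\Rightarrow$ (ii), which closes the loop and yields all three equivalences at once.

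The step (ii) $\Rightarrow$ (i) is immediate: a joint PVM for $E_1,E_2$ is in particular a joint POVM. For (i) $\Rightarrow$ (iii) I would first isolate the elementary operator fact that does all the work: if $P\in\mathcal{L}^+_s(\mathcal{H})$ is a projection and $\mathbf{0}\le A\le P$, then $AP=PA=A$. Indeed, for $\varphi\in\mathrm{ran}(\mathbf{1}-P)$ one has $0\le\langle A\varphi,\varphi\rangle\le\langle P\varphi,\varphi\rangle=0$, and the Cauchy--Schwarz inequality for the positive semidefinite form $(\varphi,\eta)\mapsto\langle A\varphi,\eta\rangle$ then forces $A\varphi=\mathbf{0}$; hence $A(\mathbf{1}-P)=\mathbf{0}$, and passing to adjoints gives the claim. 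Now suppose $E_1,E_2$ are the marginals of a joint POVM $G:\mathcal{B}(X_1\times X_2)\to\mathcal{L}^+_s(\mathcal{H})$, so that $E_1(\Delta_1)=G(\Delta_1\times X_2)$ and $E_2(\Delta_2)=G(X_1\times\Delta_2)$. Fixing $\Delta_1$ and setting $P:=E_1(\Delta_1)=G(\Delta_1\times X_2)$, a projection by hypothesis, for any $\Delta\in\mathcal{B}(X_1\times X_2)$ I decompose $G(\Delta)=G\big(\Delta\cap(\Delta_1\times X_2)\big)+G\big(\Delta\cap(\Delta_1^c\times X_2)\big)=:A+B$ with $\mathbf{0}\le A\le P$ and $\mathbf{0}\le B\le\mathbf{1}-P$. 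The elementary fact gives $PA=AP=A$ and $PB=BP=\mathbf{0}$, hence $PG(\Delta)=G(\Delta)P$ for every $\Delta$; choosing $\Delta=X_1\times\Delta_2$ yields $[E_1(\Delta_1),E_2(\Delta_2)]=\mathbf{0}$, which is (iii).

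For (iii) $\Rightarrow$ (ii) I would build the joint PVM explicitly. If $[E_1(\Delta_1),E_2(\Delta_2)]=\mathbf{0}$ for all $\Delta_1,\Delta_2$, then $E(\Delta_1\times\Delta_2):=E_1(\Delta_1)E_2(\Delta_2)$ is a product of commuting projections, hence a projection; on disjoint rectangles $E(\Delta_1\times\Delta_2)E(\Delta_1'\times\Delta_2')=\mathbf{0}$ because one of $E_1(\Delta_1)E_1(\Delta_1')$, $E_2(\Delta_2)E_2(\Delta_2')$ vanishes; and $E(X_1\times X_2)=\mathbf{1}$. This $E$ is finitely additive on the ring of finite disjoint unions of rectangles and countably additive there in the weak operator topology — which reduces, by testing against vectors $\psi$, to $\sigma$-additivity of the scalar premeasures $\langle E(\cdot)\psi,\psi\rangle$ — so a standard Carath\'eodory-type extension applied to these scalar premeasures and reassembled into an operator-valued measure produces a unique POVM $E$ on $\mathcal{B}(X_1\times X_2)$; an orthogonality check propagated from rectangles by a monotone-class argument shows $E$ is a PVM. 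Its marginals are $E(\Delta_1\times X_2)=E_1(\Delta_1)E_2(X_2)=E_1(\Delta_1)$ and, symmetrically, $E(X_1\times\Delta_2)=E_2(\Delta_2)$, so $E_1,E_2$ are the marginals of the joint PVM $E$. This is (ii), closing the cycle.

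The conceptual core is the projection lemma used in (i) $\Rightarrow$ (iii): a POVM automatically commutes with any of its values that happen to be projections, so sharpness of the two marginals immediately forces them to commute. The only genuinely technical point is the extension in (iii) $\Rightarrow$ (ii), i.e.\ upgrading the product rule on rectangles to a bona fide PVM on the product $\sigma$-algebra; this is routine operator-valued measure theory and can alternatively be quoted directly from \cite{Lahti1}. I do not expect any serious obstacle beyond keeping that extension clean.
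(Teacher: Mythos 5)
Your proposal is correct. Note, however, that the paper gives no proof of this statement at all: it is quoted as Theorem~\ref{PVM} directly from \cite{Lahti1}, so there is no in-paper argument to compare against. What you have written is essentially the standard proof from that literature, and it is sound. The implication (ii)~$\Rightarrow$~(i) is trivial as you say. For (i)~$\Rightarrow$~(iii) your absorption lemma ($\mathbf{0}\le A\le P$ with $P$ a projection forces $AP=PA=A$) is exactly the right tool, and the decomposition $G(\Delta)=G(\Delta\cap(\Delta_1\times X_2))+G(\Delta\cap(\Delta_1^c\times X_2))$ together with $G(\Delta_1^c\times X_2)=\mathbf{1}-P$ correctly yields $[E_1(\Delta_1),G(\Delta)]=\mathbf{0}$ for every $\Delta$; this is in fact slightly stronger than (iii), since it shows a projection-valued marginal commutes with the entire joint observable. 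For (iii)~$\Rightarrow$~(ii) the product prescription $E(\Delta_1\times\Delta_2)=E_1(\Delta_1)E_2(\Delta_2)$ is the standard joint spectral measure; the one step that genuinely needs care is countable additivity on the ring of finite disjoint unions of rectangles before invoking the extension theorem. A clean way to see it, which you may want to make explicit, is that for fixed $\psi$ one has $\langle E(\Delta_1\times\Delta_2)\psi,\psi\rangle=\|E_1(\Delta_1)E_2(\Delta_2)\psi\|^2$ (using $E_2E_1E_2=E_1E_2$), so the scalar set function is a genuine product-type premeasure and the classical product-measure argument applies; multiplicativity $E(\Delta\cap\Delta')=E(\Delta)E(\Delta')$ then propagates from rectangles to all Borel sets by a monotone-class argument, giving the PVM property. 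With that point spelled out (or delegated to \cite{Lahti1}, as you suggest), the proof is complete.
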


\noindent
Thanks to the spectral theorem which assures a one-to-one correspondence between self-adjoint operators and real PVMs (i.e., PVMs with spectrum in the reals) we can define the compatibility (joint measurability) of two self-adjoint operators. In particular, we say that $A_1$ and $A_2$ are compatible if the corresponding PVMs are compatible.  Therefore, as a consequence of the previous characterization of the compatibility of two PVMs, we have the following characterization of the compatibility of two self-adjoint operators.

\begin{corollary}
Two self-adjoint operators are compatible or jointly measurable if and only if they commute.
\end{corollary}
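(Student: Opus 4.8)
The plan is to deduce the corollary directly from Theorem \ref{PVM} together with the spectral theorem. Since the statement asserts an ``if and only if,'' I would organize the argument as two implications, noting that the nontrivial content has already been packaged into the equivalence (i)$\Leftrightarrow$(iii) of Theorem \ref{PVM}.

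First I would recall the setup: by the spectral theorem there is a one-to-one correspondence $A_i \mapsto E^{A_i}$ between self-adjoint operators and real PVMs, with $A_i = \int \lambda\, dE^{A_i}_\lambda$; and by definition the compatibility (joint measurability) of $A_1$ and $A_2$ means exactly the compatibility of the associated PVMs $E^{A_1}$ and $E^{A_2}$. So the assertion ``$A_1, A_2$ compatible $\iff$ $A_1, A_2$ commute'' unwinds to ``$E^{A_1}, E^{A_2}$ compatible $\iff$ $[E^{A_1},E^{A_2}]=\mathbf 0$,'' which is precisely (i)$\Leftrightarrow$(iii) of Theorem \ref{PVM}. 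For the direction that I still need to connect to operator commutativity rather than PVM commutativity, I would invoke the standard fact that $[A_1,A_2]=\mathbf 0$ (for bounded self-adjoint operators) if and only if their spectral measures commute, $[E^{A_1}(\Delta_1),E^{A_2}(\Delta_2)]=\mathbf 0$ for all Borel $\Delta_1,\Delta_2$; this is a classical consequence of the functional calculus (approximating $A_i$ by linear combinations of spectral projections, and conversely realizing each spectral projection as a strong limit of polynomials / bounded functions of $A_i$).

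Concretely: for the forward implication, assume $A_1$ and $A_2$ are compatible. By definition $E^{A_1}$ and $E^{A_2}$ are compatible, hence by Theorem \ref{PVM} they commute, and by the spectral-measure commutativity criterion $[A_1,A_2]=\mathbf 0$. For the converse, assume $[A_1,A_2]=\mathbf 0$; then $E^{A_1}$ and $E^{A_2}$ commute, so Theorem \ref{PVM} yields that they are the marginals of a joint PVM $E$ on $\mathcal B(\mathbb R\times\mathbb R)$, in particular they are compatible, and therefore $A_1$ and $A_2$ are compatible by definition.

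I do not anticipate a genuine obstacle here: the corollary is essentially a restatement, and everything substantive is carried by Theorem \ref{PVM}. The only point requiring a word of care is the passage between ``the spectral measures commute'' and ``the operators commute,'' since Theorem \ref{PVM} is phrased in terms of PVMs while the corollary is phrased in terms of operators; I would state this equivalence explicitly (or cite it as standard, e.g.\ from \cite{Reed}) rather than leave it implicit, so that the proof reads cleanly.
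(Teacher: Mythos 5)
Your proposal is correct and follows essentially the same route as the paper, which derives the corollary immediately from Theorem \ref{PVM} (i)$\Leftrightarrow$(iii) after identifying self-adjoint operators with their spectral measures. Your extra remark making explicit the standard equivalence between $[A_1,A_2]=\mathbf{0}$ and the commutativity of the spectral measures is a point the paper leaves implicit, but it does not change the argument.
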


\noindent
As the following theorem shows, in the case of two POVMs, commutativity implies compatibility but the converse is not true, i.e, commutativity is not a necessary condition for the compatibility. That is one of the main advantage in using POVMs in order to represent quantum observables and is illustrated in example \ref{QP}.

\begin{theorem}[\cite{Lahti1}]\label{commuting}
Two commuting POVMs are compatible.
\end{theorem}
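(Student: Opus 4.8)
The plan is to build a joint POVM explicitly, starting from the natural guess on rectangles and then extending it to the full product $\sigma$-algebra. First I would set
$$F_0(\Delta_1\times\Delta_2):=F_1(\Delta_1)F_2(\Delta_2),\qquad \Delta_1\in\mathcal{B}(X_1),\ \Delta_2\in\mathcal{B}(X_2).$$
Because $[F_1(\Delta_1),F_2(\Delta_2)]=\mathbf{0}$, this product is self-adjoint, and writing it as $F_1(\Delta_1)^{1/2}F_2(\Delta_2)F_1(\Delta_1)^{1/2}$ --- legitimate since $F_1(\Delta_1)^{1/2}$ belongs to the von Neumann algebra generated by $F_1(\Delta_1)$ and hence commutes with $F_2(\Delta_2)$ --- gives $\mathbf{0}\le F_0(\Delta_1\times\Delta_2)\le F_1(\Delta_1)\le\mathbf{1}$, so that $F_0$ is $\mathcal{L}^+_s(\mathcal{H})$-valued. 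On rectangles $F_0$ is separately $\sigma$-additive in each slot (because $F_1$ and $F_2$ are) and $F_0(X_1\times X_2)=\mathbf{1}$. Refining to a common grid, $F_0$ extends canonically to a positive, finitely additive set function on the algebra $\mathcal{A}_0$ of finite disjoint unions of rectangles, its value on a grid being the sum $\sum_{i,j}F_1(\Delta_1^i)F_2(\Delta_2^j)\ge\mathbf{0}$.

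Next, for a fixed $\psi\in\mathcal{H}$ consider $\mu_\psi:=\langle F_0(\cdot)\psi,\psi\rangle$ on $\mathcal{A}_0$: it is finite, positive, finitely additive and, on rectangles, a positive bimeasure of total mass $\|\psi\|^2$. The crucial step is to show that such a bimeasure is $\sigma$-additive on $\mathcal{A}_0$, hence --- by Carath\'eodory's extension theorem --- extends uniquely to a finite positive measure on $\mathcal{B}(X_1\times X_2)$, still denoted $\mu_\psi$. Here the standing hypothesis that the $X_i$ are second countable is used: with $\nu:=\mu_\psi(X_1\times\cdot)$, the Radon--Nikodym densities $f_{\Delta_1}:=d\mu_\psi(\Delta_1\times\cdot)/d\nu$ satisfy $0\le f_{\Delta_1}\le 1$ and, $\mathcal{B}(X_1)$ being countably generated, can be chosen so that $\Delta_1\mapsto f_{\Delta_1}(x_2)$ is a sub-probability measure for $\nu$-almost every $x_2$; integrating the sections against $\nu$ yields the required measure. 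This is the classical bimeasure-extension argument.

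Then I would reconstruct the operator-valued measure. Defining $\mu_{\psi,\phi}$ by polarization from the $\mu_\psi$'s, one obtains, for each fixed $\psi,\phi$, a complex measure on $\mathcal{B}(X_1\times X_2)$ with $\mu_{\psi,\phi}(\Delta_1\times\Delta_2)=\langle F_1(\Delta_1)F_2(\Delta_2)\psi,\phi\rangle$. For each $\Delta\in\mathcal{B}(X_1\times X_2)$ the map $(\psi,\phi)\mapsto\mu_{\psi,\phi}(\Delta)$ is sesquilinear, and the estimate $|\mu_{\psi,\phi}(\Delta)|\le\mu_\psi(\Delta)^{1/2}\mu_\phi(\Delta)^{1/2}\le\|\psi\|\,\|\phi\|$ --- valid on $\mathcal{A}_0$ by the Cauchy--Schwarz inequality applied to the positive form $\langle F_0(\Delta)\cdot,\cdot\rangle$ and on all of $\mathcal{B}(X_1\times X_2)$ by approximation --- shows it is bounded. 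Hence there is a unique $F(\Delta)\in\mathcal{L}_s(\mathcal{H})$ with $\langle F(\Delta)\psi,\phi\rangle=\mu_{\psi,\phi}(\Delta)$; one checks $F(\Delta)=F(\Delta)^*$, $F(\Delta)\ge\mathbf{0}$, $F(X_1\times X_2)=\mathbf{1}$, and $\sigma$-additivity of $F$ in the weak operator topology, inherited from the $\sigma$-additivity of the $\mu_\psi$. Since $\langle F(\Delta_1\times X_2)\psi,\psi\rangle=\langle F_1(\Delta_1)\psi,\psi\rangle$ and $\langle F(X_1\times\Delta_2)\psi,\psi\rangle=\langle F_2(\Delta_2)\psi,\psi\rangle$ for all $\psi$, the POVM $F$ has $F_1$ and $F_2$ as marginals, i.e.\ $F_1$ and $F_2$ are compatible.

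The genuinely nontrivial point is the middle one: promoting the separately $\sigma$-additive rectangle function to an honest $\sigma$-additive measure on the product $\sigma$-algebra. Finite additivity and the marginal identities are immediate from commutativity; it is $\sigma$-additivity on $\mathcal{A}_0$ where the bimeasure structure and the second countability of the $X_i$ must be brought in. Once each $\mu_\psi$ is a bona fide measure, the polarization and operator-reconstruction steps are routine.
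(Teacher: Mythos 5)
The paper offers no proof of this theorem --- it is quoted from the reference \cite{Lahti1} --- but your construction is precisely the standard argument behind that citation: positivity of the rectangle function $F_1(\Delta_1)F_2(\Delta_2)=F_1(\Delta_1)^{1/2}F_2(\Delta_2)F_1(\Delta_1)^{1/2}$, extension of the positive bimeasures $\langle F_1(\cdot)F_2(\cdot)\psi,\psi\rangle$ from the rectangle algebra to $\mathcal{B}(X_1\times X_2)$, and recovery of the operator-valued measure by polarization and the Riesz representation of bounded sesquilinear forms. The argument is correct and you rightly isolate the bimeasure extension as the only delicate step; the single imprecision is attributing the disintegration to countable generation of $\mathcal{B}(X_1)$ alone --- for the sections $\Delta_1\mapsto f_{\Delta_1}(x_2)$ to be countably additive (not merely finitely additive) one needs the standard Borel structure of the $X_i$ supplied by the paper's standing hypothesis (Hausdorff, locally compact, second countable, hence Polish), i.e.\ the same compact-class argument used to prove existence of regular conditional probabilities.
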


\noindent
Now, we use Naimark's dilation theorem in order to characterize the compatibility of two POVMs.

\begin{theorem}\label{CNS}
Two POVMs $F_1:\mathcal{B}(X_1)\to\mathcal{L}^+_s(\mathcal{H})$ and $F_2:\mathcal{B}(X_2)\to\mathcal{L}^+_s(\mathcal{H})$ are compatible if and only if there are two Naimark extensions $E^+_1:\mathcal{B}(X_1)\to\mathcal{L}^+_s(\mathcal{H})$ and $E^+_2:\mathcal{B}(X_2)\to\mathcal{L}^+_s(\mathcal{H})$ such that $[E^+_1,E^+_2]=\mathbf{0}$. 
\end{theorem}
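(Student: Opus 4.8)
The plan is to reduce the statement about POVMs to a statement about PVMs on the product $\sigma$-algebra $\mathcal{B}(X_1\times X_2)$, using Naimark's theorem (Theorem \ref{Neumark}) in one direction and the characterization of compatible PVMs (Theorem \ref{PVM}) in the other. The bridge in both directions is that ``$F_1,F_2$ compatible'' means exactly ``there is a joint POVM on $\mathcal{B}(X_1\times X_2)$ whose marginals are $F_1$ and $F_2$''.

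For the ``only if'' part I would start from a joint POVM $F:\mathcal{B}(X_1\times X_2)\to\mathcal{L}^+_s(\mathcal{H})$ and apply Naimark's theorem to obtain an extension $\mathcal{H}^+$, the projection $P$ of $\mathcal{H}^+$ onto $\mathcal{H}$, and a PVM $E^+$ on $\mathcal{B}(X_1\times X_2)$ with $F(\Delta)\psi=PE^+(\Delta)\psi$ for $\psi\in\mathcal{H}$. Then I would set $E_1^+(\Delta_1):=E^+(\Delta_1\times X_2)$ and $E_2^+(\Delta_2):=E^+(X_1\times\Delta_2)$ and verify three things: (a) these are PVMs, since each value lies in the range of $E^+$ hence is a projection, while normalization and $\sigma$-additivity are immediate because $\Delta_1\mapsto\Delta_1\times X_2$ fixes $X_1\times X_2$ and carries disjoint unions to disjoint unions; (b) they commute, because any two operators in the range of a PVM commute, and concretely $E_1^+(\Delta_1)E_2^+(\Delta_2)=E^+((\Delta_1\times X_2)\cap(X_1\times\Delta_2))=E^+(\Delta_1\times\Delta_2)=E_2^+(\Delta_2)E_1^+(\Delta_1)$; (c) they are Naimark extensions of $F_1,F_2$, since $PE_1^+(\Delta_1)\psi=PE^+(\Delta_1\times X_2)\psi=F(\Delta_1\times X_2)\psi=F_1(\Delta_1)\psi$ for $\psi\in\mathcal{H}$, and symmetrically for $E_2^+$.

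For the ``if'' part I would take the commuting Naimark extensions $E_1^+,E_2^+$, which are realized on a common extension $\mathcal{H}^+\supset\mathcal{H}$ with the same compression $P$, invoke Theorem \ref{PVM} to get a joint PVM $E:\mathcal{B}(X_1\times X_2)\to\mathcal{L}^+_s(\mathcal{H}^+)$ with marginals $E_1^+,E_2^+$, and then define $F(\Delta):=PE(\Delta)P$. The remaining checks are routine: $F$ is a normalized POVM on $\mathcal{H}$ (positivity and self-adjointness are clear, weak $\sigma$-additivity is inherited by restricting the test vectors to $\mathcal{H}$, and $F(X_1\times X_2)=P\mathbf{1}P=\mathbf{1}_{\mathcal{H}}$), and it has the correct marginals, since for $\psi\in\mathcal{H}$ one has $F(\Delta_1\times X_2)\psi=PE(\Delta_1\times X_2)\psi=PE_1^+(\Delta_1)\psi=F_1(\Delta_1)\psi$, and likewise $F(X_1\times\Delta_2)=F_2(\Delta_2)$. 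Thus $F$ witnesses the compatibility of $F_1$ and $F_2$.

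I do not expect a deep obstacle here; the whole content is the translation above. The only points requiring a little care are that a marginal of a PVM is again a PVM — which uses that every value of $E^+$ is a projection, so that orthogonality on disjoint product rectangles is inherited — and the bookkeeping that both dilations must live on one and the same $\mathcal{H}^+$ with one and the same $P$. The latter is automatic on the ``only if'' side, where both $E_1^+$ and $E_2^+$ are carved out of a single Naimark dilation of the joint POVM, and is part of the hypothesis on the ``if'' side. (The real weight of the result lies not in its proof but in its use in Theorem \ref{self} and Example \ref{QP}.)
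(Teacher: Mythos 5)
Your proposal is correct and follows essentially the same route as the paper's own proof: in the ``only if'' direction you dilate the joint POVM and take marginals of the resulting PVM, and in the ``if'' direction you use the commutativity of the two dilations to build a joint PVM and compress it by $P$. The extra checks you supply (that the marginals of a PVM are PVMs, and that both dilations must share one $\mathcal{H}^+$ and one $P$) are details the paper leaves implicit, but the argument is the same.
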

\begin{proof}
Suppose $F_1$ and $F_2$ are compatible. Then, there is a POVM $F$ of which $F_1$ and $F_2$ are the marginals; i.e., $F_1(\Delta_1)=F(\Delta_1\times X_2)$, $F_2(\Delta_2)=F(X_1\times\Delta_2)$. Let $E^+$ be a Naimark dilation of $F$ and consider the PVMs $E_1^+(\Delta_1)=E^+(\Delta_1\times X_2)$ and $E^+_2(\Delta_2)=E^+(X_1\times\Delta_2)$. We have, $PE^+_1(\Delta_1)P=PE^+(\Delta_1\times X_2)P=F(\Delta_1\times X_2)=F_1(\Delta_1)$ and $PE^+_2(\Delta_2)P=PE^+(X_1\times\Delta_2)P=F(X_1\times\Delta_2)=F_2(\Delta_2)$. Moreover, $E^+_1$ and $E^+_2$ commutes since they are the marginals of the PVM $E^+$.

Conversely, suppose there are two Naimark dilation $E^+_1$ and $E^+_2$ such that $[E^+_1,E^+_2]=\mathbf{0}$. Thanks to the commutativity $[E^+_1,E^+_2]=\mathbf{0}$, there is a joint PVM $E^+$; i.e., $E^+_1(\Delta_1)=E^+(\Delta_1\times X_2)$, $E^+_2(\Delta_2)=E^+(X_1\times\Delta_2)$. We have, 
\begin{align*}
F_1(\Delta_1)&=PE^+_1(\Delta_1)P=PE^+(\Delta_1\times X_2)P=F(\Delta_1\times X_2)\\
F_2(\Delta_2)&=PE^+_2(\Delta_2)P=PE^+(X_1\times\Delta_2)P=F(X_1\times\Delta_2)
\end{align*}
\noindent
 where, $F:=PE^+P$. Therefore, $F$ is a joint POVM for $F_1$ and $F_2$.
\end{proof}
\noindent
Note that if $F_1$ and $F_2$ are PVMs, theorem \ref{CNS} coincides with theorem \ref{PVM}, iii), i.e.,  $F_1$ and $F_2$  are compatible if and only if  they commute. Indeed, $PE^+_iP=E_i$ implies that $[P,E^+_i]=\mathbf{0}$ and then $[E_1,E_2]=\mathbf{0}$.

Recently it was proved that the compatibility of more than two POVMs cannot be characterized by means of the Naimark's dilation \cite{Heunen}.

In the case of real POVMs, theorem \ref{CNS} can be expressed in the language of self-adjoint operators.

\begin{theorem}\label{CNS1}
Two real POVMs $F_1$ and $F_2$ are compatible if and only if there are two commuting self-adjoint operators $A^+_1$ and $A^+_2$ in an extended Hilbert space $\mathcal{H}^+$ such that $F_i(\Delta)=P\chi_{\Delta}(A_i^+)P$, $i=1,2$.
\end{theorem}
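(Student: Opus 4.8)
The plan is to recognise Theorem~\ref{CNS1} as the translation of Theorem~\ref{CNS} into the language of self-adjoint operators, using the spectral theorem's one-to-one correspondence between real PVMs and (possibly unbounded) self-adjoint operators.

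For the ``only if'' direction I would start from the compatibility of the real POVMs $F_1$ and $F_2$ and invoke Theorem~\ref{CNS} to obtain commuting Naimark extensions $E^+_1$ and $E^+_2$ on a common extended Hilbert space $\mathcal{H}^+$, with $PE^+_i(\Delta)P=F_i(\Delta)$. Since each $F_i$ is real, $E^+_i$ is a PVM with spectrum in $\mathbb{R}$, so the spectral theorem attaches to it a unique self-adjoint operator $A^+_i$ whose spectral measure is $E^+_i$; hence $\chi_\Delta(A^+_i)=E^+_i(\Delta)$ and $F_i(\Delta)=P\chi_\Delta(A^+_i)P$. I would then observe that $[E^+_1,E^+_2]=\mathbf{0}$ is exactly $[A^+_1,A^+_2]=\mathbf{0}$ in the standard sense, i.e.\ commutativity of all the spectral projections.

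For the ``if'' direction, given commuting self-adjoint operators $A^+_1,A^+_2$ on $\mathcal{H}^+\supseteq\mathcal{H}$ with $F_i(\Delta)=P\chi_\Delta(A^+_i)P$, I would set $E^+_i(\Delta):=\chi_\Delta(A^+_i)$, note that this is a PVM, that $PE^+_i(\Delta)P=F_i(\Delta)$ makes $E^+_i$ a Naimark extension of $F_i$, and that $[A^+_1,A^+_2]=\mathbf{0}$ forces $[E^+_1,E^+_2]=\mathbf{0}$; the ``if'' part of Theorem~\ref{CNS} then yields the compatibility of $F_1$ and $F_2$.

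The only point requiring care, and the main (mild) obstacle, is the treatment of unbounded operators: when $X_i\subseteq\mathbb{R}$ is unbounded (the position and momentum case) the $A^+_i$ are unbounded self-adjoint operators, so ``$[A^+_1,A^+_2]=\mathbf{0}$'' has to be read as commutativity of the associated spectral measures rather than as a naive identity on a common domain, and $\chi_\Delta(A^+_i)$ is understood through the functional calculus of Theorem~\ref{Pro}. With this reading the equivalence ``commuting real PVMs $\leftrightarrow$ commuting self-adjoint operators'' is precisely the spectral-theorem dictionary, and both directions close. I would also record the harmless identification of $\mathcal{H}$ with $P\mathcal{H}^+$, which turns the statement $F(\Delta)\psi=PE^+(\Delta)\psi$, $\psi\in\mathcal{H}$, of Theorem~\ref{Neumark} into the operator equation $F(\Delta)=PE^+(\Delta)P$ used throughout.
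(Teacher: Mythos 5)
Your argument is exactly what the paper intends: the paper gives no separate proof of Theorem~\ref{CNS1}, presenting it as the immediate translation of Theorem~\ref{CNS} via the spectral-theorem correspondence between real PVMs and self-adjoint operators, which is precisely the dictionary you spell out in both directions. Your added care about unbounded operators (reading commutativity as commutativity of spectral projections) is a sensible explicit remark but does not change the route.
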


Theorems \ref{CNS} and \ref{CNS1} 
are illustrated in the following diagram.

\begin{equation*}
\xymatrix{ 
A_1^+\ar@{<->}[r]&E^+_1\ar@{<->}[r]^{c}& E^+_2\ar@{<->}[r] &A_2^+\\ 
& F_1 \ar@{<->}[u]_{P}\ar@{<->}[r]^{c}& F_2\ar@{<->}[u]^{P} &
}
\end{equation*}

\noindent
where the arrow  $\xymatrix{ \ar@{<->}[r]^{c} &}$ denotes compatibility, $\xymatrix{ \ar@{<->}[r]^{P} &}$ denotes the relationship between a POVM and its dilation as expressed by the Naimark's theorem. In the case of real POVMs, the dilations $E^+_1$ and $E^+_2$ correspond to two self-adjoint operators $A^+_1$ and $A^+_2$ respectively and we use the arrow $\xymatrix{ \ar@{<->}[r] &}$ in order to represent such a correspondence.

\noindent
Therefore, we can say that each couple of compatible self-adjoint operators in an extended Hilbert space $\mathcal{H}^+$ corresponds to a couple of compatible real POVMs in $\mathcal{H}$ and vice versa.

Another possible statement of theorem \ref{CNS1} is the following.
\begin{theorem} Two real POVMs $F_1$ and $F_2$ are compatible if and only if there is an extended Hilbert space $\mathcal{H}^+$ and two commuting self-adjoint operators $A_1^+$, $A_2^+$ such that, for each bounded measurable function $f$, the operator  $f(A_i^+)$ is a self-adjoint dilation of the operator $F_i(f)=\int f(t)\,dF_i(t)$, $i=1,2$.
\end{theorem}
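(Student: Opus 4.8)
The plan is to reduce this statement to Theorem~\ref{CNS1}, which has already established the existence of two commuting self-adjoint operators $A_1^+$, $A_2^+$ in an extended Hilbert space $\mathcal{H}^+$ satisfying $F_i(\Delta)=P\chi_\Delta(A_i^+)P$, $i=1,2$, precisely when $F_1$ and $F_2$ are compatible. Since the two statements share the same left-hand side (compatibility of $F_1,F_2$) and very nearly the same right-hand side, the real content is to upgrade the identity $F_i(\Delta)=P\chi_\Delta(A_i^+)P$ for characteristic functions $\chi_\Delta$ to the identity $F_i(f)=Pf(A_i^+)P$ for arbitrary bounded measurable $f$, and to phrase this as "$f(A_i^+)$ is a self-adjoint dilation of $F_i(f)$."

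First I would prove the forward direction. Assume $F_1$ and $F_2$ are compatible. Apply Theorem~\ref{CNS1} to obtain the extended Hilbert space $\mathcal{H}^+$ and commuting self-adjoint operators $A_1^+$, $A_2^+$ with $F_i(\Delta)=P\chi_\Delta(A_i^+)P$. Now fix $i$ and observe that $A_i^+$ is exactly the self-adjoint operator corresponding to a Naimark dilation $E_i^+$ of $F_i$ (this is how $A_i^+$ was produced in the proof of Theorem~\ref{CNS1}, via the spectral measure of $A_i^+$). Hence Theorem~\ref{Pro} applies verbatim: for every bounded measurable real-valued function $f$,
\[
F_i(f)=\int f(t)\,dF_i(t)=Pf(A_i^+)P.
\]
By definition this says that $f(A_i^+)$, which is a bounded self-adjoint operator on $\mathcal{H}^+$ since $f$ is bounded, compresses to $F_i(f)$ under the projection $P$; that is, $f(A_i^+)$ is a self-adjoint dilation of $F_i(f)$. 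Since $A_1^+$ and $A_2^+$ commute, $f(A_1^+)$ and $f(A_2^+)$ are constructed from commuting spectral measures and the required commuting self-adjoint dilations are in hand.

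For the converse, suppose such an extended Hilbert space and commuting self-adjoint operators $A_1^+$, $A_2^+$ exist with the property that $f(A_i^+)$ dilates $F_i(f)$ for every bounded measurable $f$. Specializing to $f=\chi_\Delta$ gives $F_i(\Delta)=F_i(\chi_\Delta)=P\chi_\Delta(A_i^+)P$, so the hypotheses of Theorem~\ref{CNS1} are met and $F_1$, $F_2$ are compatible. Alternatively, one can argue directly: the spectral measures $E_i^+(\Delta):=\chi_\Delta(A_i^+)$ are PVMs on $\mathcal{H}^+$, they commute because $A_1^+$ and $A_2^+$ commute, and $PE_i^+(\cdot)P=F_i(\cdot)$ exhibits each $E_i^+$ as a Naimark extension of $F_i$; then Theorem~\ref{CNS} finishes the job. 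I do not anticipate a genuine obstacle here — the proof is essentially a reformulation of Theorems~\ref{CNS1} and~\ref{Pro} — the only point requiring a word of care is the boundedness of $f$, which is exactly what guarantees $f(A_i^+)$ is a bona fide bounded self-adjoint operator on $\mathcal{H}^+$ (the unbounded case being flagged in the remark after Theorem~\ref{Pro}), so that "self-adjoint dilation" is meaningful without domain subtleties.
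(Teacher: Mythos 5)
Your proof is correct and follows essentially the same route as the paper: the forward direction invokes Theorem~\ref{CNS} (via its restatement Theorem~\ref{CNS1}) to get the commuting dilations and then Theorem~\ref{Pro} to pass from characteristic functions to arbitrary bounded measurable $f$, while the converse specializes $f=\chi_\Delta$ and applies Theorem~\ref{CNS}. The remark about boundedness of $f$ is a sensible addition but does not change the argument.
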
 
\begin{proof}
Suppose $A^+_1$ and $A^+_2$ are such that $Pf(A_i^+)P=F_i(f)$ for each bounded, measurable function. Then, by setting $f=\chi_{\Delta}$ we get $PE_i^+(\Delta)P=P\chi_{\Delta}(A_i^+)P=F_i(\chi_{\Delta})=F_i(\Delta)$ which prove that the spectral measure $E_i^+$ corresponding to $A^+_i$ is a dilation of $F_i$. Moreover, by hypothesis, $[E^+_1,E^+_2]=\mathbf{0}$ and, by theorem \ref{CNS}, $F_1$ and $F_2$ are compatible.  

Now, suppose that $F_1$ and $F_2$ are compatible. By theorem \ref{CNS} there are two compatible PVMs $E^+_1$ and $E_2^+$ such that $PE^+_iP=F_i$, $i=1,2$. The self-adjoint operators $A_1^+$ and $A^+_2$ corresponding to $E^+_1$ and $E_2$ respectively commute. By theorem \ref{Pro},  $Pf(A_i^+)P=\int f(t)\,dF_i(t)=F_i(f)$, $i=1,2$.
\end{proof}
\noindent
The theorem is illustrated by the following diagram

\begin{equation*}
\xymatrix{ 
f(A^+_1)\ar[d]_{P} \ar@{<->}[r]^{c}  & f(A^+_2) \ar[d]^{P}\\ 
 F_1(f) \ar@{<-}[d]_{f}&  F_2(f)\ar@{<-}[d]_{f}\\   
F_1\ar@{<->}[r]^{c} & F_2 
}
\end{equation*}

\noindent
where, $\xymatrix{ \ar[r]_{P} &}$ denotes the projection from the extended Hilbert space $\mathcal{H}^+$ onto $\mathcal{H}$ while $\xymatrix{ \ar@{->}[r]^{f} &}$ denotes the maps $f\mapsto F(f)$. Notice that in general $F_1(f)$ and $F_2(f)$ as well as $F_1$ and $F_2$ do not commute.

\section{Compatibility and smearing}
There are well known examples of incompatible PVMs that can be smeared into two compatible POVMs. As a relevant example we can consider the position and momentum observables which are represented by two incompatible PVMs $Q$ and $P$ \cite{Davies,Holevo,Busch,Guz,Schroeck1}. By an appropriate choice of the  smearing of $Q$ and $P$ one can get two compatible POVMs $F^Q$ and $F^P$ (see example \ref{QP}). Another relevant example was recently provided in Ref. \cite{Busch1}. Here it is shown that any couple of observables $F_1$, $F_2$ in a general probabilistic model can always be smeared in such a way to get two compatible observables. That is relevant since it provides a transition from incompatibility to compatibility for any couple of incompatible observables. Now, we use the same kind of smearing in the quantum mechanical context where observables are represented by POVMs. In particular, given two POVMs $F_1$, $F_2$, the smearings 
\begin{align*}
\widetilde{F}_1(\Delta_1)&=\int\mu^{(1)}_{\Delta_1}(x)\,dF_1(x)=\int\big[\lambda\chi_{\Delta_1}(x)+(1-\lambda)\nu^{(1)}(\Delta_1)\big]\,dF_1(x)\\
 \widetilde{F}_2(\Delta_2)&=\int\mu^{(2)}_{\Delta_2}(x)\,dF_2(x)=\int\big[(1-\lambda)\chi_{\Delta_2}(x)+\lambda\nu^{(2)}(\Delta_2)\big]\,dF_2(x)
\end{align*}
are compatible. Indeed,
$$\widetilde{F}(\Delta_1\times\Delta_2)=\lambda\nu^{(2)}(\Delta_2)F_1(\Delta_1)+(1-\lambda)\nu^{(1)}(\Delta_1)F_2(\Delta_2).$$
\noindent
is a joint POVM.

\noindent
That raises the problem of characterizing those smearings which convert two incompatible POVMs into two compatible ones. The aim of the present section is to give conditions for the compatibility of the smearing of two incompatible real PVMs (or self-adjoint operators). That is equivalent to give conditions for the compatibility of two commutative POVMs. In particular, we establish a connection between the compatibility of the smearings of two self-adjoint operators $A_1$, $A_2$ and the existence of two compatible self-adjoint dilations of $A_1$ and $A_2$.

\begin{proposition}
If $A_1$ and $A_2$ are compatible, all the smearings $F_1$, $F_2$ of $A_1$ and $A_2$ are compatible.
\end{proposition}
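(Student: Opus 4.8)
The plan is to derive compatibility of the smearings directly from the commutativity of $A_1$ and $A_2$, using Theorem \ref{commuting}. First I would spell out what a smearing is: by definition (cf. Theorem \ref{Markov}) a smearing $F_1$ of $A_1$ is a POVM of the form
$$F_1(\Delta)=\int\mu^{(1)}_{\Delta}(\lambda)\,dE^{A_1}_{\lambda}=\mu^{(1)}_{\Delta}(A_1),$$
with $\mu^{(1)}$ a Markov kernel and $E^{A_1}$ the spectral measure of $A_1$, and likewise $F_2(\Delta)=\mu^{(2)}_{\Delta}(A_2)$. Since each $\mu^{(i)}_{\Delta}$ is a bounded measurable function, $F_1(\Delta_1)\in\mathcal{A}^W(A_1)$ and $F_2(\Delta_2)\in\mathcal{A}^W(A_2)$ for all $\Delta_1\in\mathcal{B}(X_1)$, $\Delta_2\in\mathcal{B}(X_2)$, i.e. each smeared effect lies in the von Neumann algebra generated by the sharp operator it comes from.

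Next I would invoke the hypothesis. That $A_1$ and $A_2$ are compatible means $[A_1,A_2]=\mathbf{0}$, hence their spectral measures commute, $[E^{A_1},E^{A_2}]=\mathbf{0}$, and therefore $\mathcal{A}^W(A_1)$ and $\mathcal{A}^W(A_2)$ commute elementwise; in particular $[F_1(\Delta_1),F_2(\Delta_2)]=\mathbf{0}$ for all $\Delta_1,\Delta_2$, i.e. $[F_1,F_2]=\mathbf{0}$. Theorem \ref{commuting} then yields that $F_1$ and $F_2$ are compatible, which is the claim. If one wants the joint POVM written out explicitly, I would instead pick a joint PVM $E:\mathcal{B}(\mathbb{R}^2)\to\mathcal{L}^+_s(\mathcal{H})$ of the commuting pair $E^{A_1},E^{A_2}$ and set $F(\Delta_1\times\Delta_2)=\int\mu^{(1)}_{\Delta_1}(\lambda_1)\,\mu^{(2)}_{\Delta_2}(\lambda_2)\,dE_{(\lambda_1,\lambda_2)}$; since $(\lambda_1,\lambda_2)\mapsto\mu^{(1)}_{(\cdot)}(\lambda_1)\otimes\mu^{(2)}_{(\cdot)}(\lambda_2)$ is a Markov kernel, this defines a normalized POVM on the product $\sigma$-algebra whose marginals are $F_1$ and $F_2$, using $\mu^{(i)}_{X_i}\equiv 1$.

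I do not expect a real obstacle. The one step deserving a line of justification is the passage from $[E^{A_1},E^{A_2}]=\mathbf{0}$ to commutativity of the functional calculi $\mu^{(1)}_{\Delta_1}(A_1)$ and $\mu^{(2)}_{\Delta_2}(A_2)$, which is the standard fact that the bounded Borel functional calculus of a self-adjoint operator stays inside the von Neumann algebra it generates; in the constructive variant the analogous point is the routine check that the rectangle formula for $F$ extends to a genuine POVM on $\mathcal{B}(X_1\times X_2)$, which is immediate once the joint PVM $E$ is available.
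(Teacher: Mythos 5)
Your argument is correct and follows the same route as the paper's proof: express $F_i(\Delta)=\mu^{(i)}_{\Delta}(A_i)$ via Markov kernels, deduce $[F_1,F_2]=\mathbf{0}$ from $[A_1,A_2]=\mathbf{0}$, and conclude by Theorem \ref{commuting}. Your extra remarks (the von Neumann algebra justification and the explicit product joint POVM) only make explicit steps the paper leaves implicit.
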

\begin{proof}
Let $F_1$ and $F_2$ be smearings of $A_1$ and $A_2$ respectively. Then there are two Markov kernels $\mu^{(1)}$ and $\mu^{(2)}$ such that $F_i(\Delta)=\mu^{(i)}_{\Delta}(A_i)$, $i=1,2$. Since $[A_1,A_2]=\mathbf{0}$, we have $[F_1,F_2]=\mathbf{0}$. Therefore, by theorem \ref{commuting}, $F_1$ and $F_2$ are compatible.
\end{proof}

\begin{theorem}\label{bijective}
Let $F_1$ and $F_2$ be compatible smearings of $A_1$ and $A_2$ respectively. Then, they are compatible smearings of $f_1(A_1)$ and $f_2(A_2)$ whenever $f_1:\sigma(A_1)\to\mathbb{R}$ and $f_2:\sigma(A_2)\to\mathbb{R}$ are almost everywhere bijective. 
\end{theorem}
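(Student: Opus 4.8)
The plan is to notice that the compatibility of the pair $(F_1,F_2)$ is a property of the two POVMs alone and is therefore inherited verbatim from the hypothesis; the entire content of the statement is to show that each $F_i$ is \emph{also} a smearing of $B_i:=f_i(A_i)$. By hypothesis there are Markov kernels $\mu^{(1)},\mu^{(2)}$ with
\begin{equation*}
F_i(\Delta)=\mu^{(i)}_\Delta(A_i)=\int\mu^{(i)}_\Delta(\lambda)\,dE^{A_i}_\lambda,\qquad i=1,2,
\end{equation*}
and I will manufacture out of $\mu^{(i)}$ and a measurable inverse of $f_i$ a Markov kernel $\nu^{(i)}$ exhibiting $F_i$ as a smearing of $B_i$.

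First I would produce a bounded measurable left inverse of $f_i$. Since $f_i$ is almost everywhere bijective --- injective on a Borel set $\Gamma_i\subset\sigma(A_i)$ with $E^{A_i}(\sigma(A_i)\setminus\Gamma_i)=\mathbf{0}$ --- it is a Borel injection between Borel subsets of $\mathbb{R}$, so its image is Borel and the inverse $g_i\colon f_i(\Gamma_i)\to\Gamma_i$ is Borel (Lusin--Souslin; this is the kind of fact that underlies Theorem \ref{bijections}). Extending $g_i$ by an arbitrary constant I get a bounded Borel $g_i$ with $g_i\circ f_i=\mathrm{id}$ on a set of full $E^{A_i}$-measure, whence the composition rule of the functional calculus yields $g_i(B_i)=g_i(f_i(A_i))=(g_i\circ f_i)(A_i)=A_i$. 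Next I set $\nu^{(i)}_\Delta(t):=\mu^{(i)}_\Delta\big(g_i(t)\big)$: for fixed $t$ the map $\Delta\mapsto\nu^{(i)}_\Delta(t)$ is the probability measure $\mu^{(i)}_{(\cdot)}(g_i(t))$, and for fixed $\Delta$ the map $t\mapsto\nu^{(i)}_\Delta(t)=\mu^{(i)}_\Delta\circ g_i$ is measurable as a composition, so $\nu^{(i)}$ is a Markov kernel on $\sigma(B_i)\times\mathcal{B}(X_i)$. Applying the composition rule once more,
\begin{equation*}
F_i(\Delta)=\mu^{(i)}_\Delta(A_i)=\mu^{(i)}_\Delta\big(g_i(B_i)\big)=\big(\mu^{(i)}_\Delta\circ g_i\big)(B_i)=\nu^{(i)}_\Delta(B_i),\qquad i=1,2,
\end{equation*}
so each $F_i$ is a smearing of $f_i(A_i)$; since $F_1,F_2$ are compatible by assumption, they are compatible smearings of $f_1(A_1)$ and $f_2(A_2)$.

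The only genuinely delicate point --- and the one I expect to be the main obstacle to pin down rigorously --- is the measurability of the left inverse $g_i$ together with the legitimacy of the identity $g_i(f_i(A_i))=A_i$ when $g_i\circ f_i$ agrees with the identity only $E^{A_i}$-almost everywhere. Both are standard: Borel inverses of Borel injections exist by the Lusin--Souslin theorem, and the functional calculus of $A_i$ only sees a function modulo $E^{A_i}$-null sets. If $f_i$ is allowed to be unbounded one should in addition restrict everything to the appropriate operator domains, exactly as remarked after Theorem \ref{Pro}. Everything else is the bookkeeping above.
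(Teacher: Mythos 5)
Your proof is correct and follows essentially the same route as the paper: both define the new Markov kernel as $\mu^{(i)}_\Delta\circ f_i^{-1}$ and conclude by the composition rule of the functional calculus, $\mu^{(i)}_\Delta(f_i^{-1}(f_i(A_i)))=F_i(\Delta)$. The only difference is that you justify the Borel measurability of the inverse (Lusin--Souslin) and the almost-everywhere caveat explicitly, points the paper's proof takes for granted.
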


\begin{proof}
Since $F_1$, $F_2$ are smearings of $A_1$ and $A_2$ respectively, there are two Markov kernels $\mu^{(1)}$, $\mu^{(2)}$ such that $F_1(\Delta)=\mu^{(1)}_{\Delta}(A_1)$, $F_2(\Delta)=\mu^{(2)}_{\Delta}(A_2)$. Now, we can define the Markov kernels $\omega_{\Delta}^{(1)}(\lambda):=[\mu^{(1)}_{\Delta}\circ f_1^{-1}](\lambda)=\mu^{(1)}_{\Delta}( f_1^{-1}(\lambda))$, $\omega^{(2)}_{\Delta}(\lambda):=[\mu^{(2)}_{\Delta}\circ f_2^{-1}](\lambda)=\mu^{(2)}_{\Delta}( f_2^{-1}(\lambda))$. We have 

$$\omega^{(1)}_{\Delta}(f_1(A_1))=\mu^{(1)}_{\Delta}( f_1^{-1}(f_1(A_1)))=F_1(\Delta)$$
$$\omega^{(2)}_{\Delta}(f_2(A_2))=\mu^{(2)}_{\Delta}( f_2^{-1}f_2((A_2)))=F_2(\Delta)$$

\noindent
which proves the thesis.
\end{proof}

\begin{theorem}
If two bounded self-adjoint operators $A_1$ and $A_2$ have compatible bounded self-adjoint dilations $A^+_1$ and $A_2^+$ such that $F_i(\Delta):=P\chi_{\Delta} (A^+_i)P$ is a commutative POVM and $\mathcal{A}^W(F_i)=\mathcal{A}^W(A_i)$, $i=1,2$, then $F_1$ and $F_2$ are compatible smearing of $A_1$ and $A_2$ respectively.
\end{theorem}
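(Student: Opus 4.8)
The plan is to establish the two assertions of the statement separately: that $F_1$ and $F_2$ are compatible, and that each $F_i$ is a smearing of $A_i$. The first is immediate from the machinery already set up. Since $A_1^+$ and $A_2^+$ are commuting bounded self-adjoint operators on $\mathcal H^+$, their spectral measures $E_i^+(\Delta):=\chi_\Delta(A_i^+)$ are PVMs with $[E_1^+,E_2^+]=\mathbf{0}$, and each $E_i^+$ is a Naimark dilation of $F_i$, because for $\psi\in\mathcal H$ we have $PE_i^+(\Delta)\psi=PE_i^+(\Delta)P\psi=P\chi_\Delta(A_i^+)P\psi=F_i(\Delta)\psi$. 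Hence $F_1$ and $F_2$ satisfy the sufficient condition of theorem \ref{CNS} (equivalently, they are an instance of theorem \ref{CNS1}) and are compatible.

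For the smearing part I would use that each $F_i$ is commutative and apply theorem \ref{Markov}: there exist a self-adjoint operator $\widetilde A_i=\int\lambda\,dE^{(i)}_\lambda$, a subset $\Gamma_i$ with $E^{(i)}(\Gamma_i)=\mathbf{1}$ and a Feller Markov kernel $\mu^{(i)}$ such that $F_i(\Delta)=\int_{\Gamma_i}\mu^{(i)}_\Delta(\lambda)\,dE^{(i)}_\lambda=\mu^{(i)}_\Delta(\widetilde A_i)$ and, crucially, $\mathcal{A}^W(F_i)=\mathcal{A}^W(\widetilde A_i)$. Combining this with the hypothesis $\mathcal{A}^W(F_i)=\mathcal{A}^W(A_i)$ gives $\mathcal{A}^W(A_i)=\mathcal{A}^W(\widetilde A_i)$. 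Now I would invoke the standard fact that every self-adjoint element of the abelian von Neumann algebra generated by a bounded self-adjoint operator is a bounded real Borel function of that operator: this yields functions $g_i,h_i$ with $A_i=g_i(\widetilde A_i)$ and $\widetilde A_i=h_i(A_i)$, and substituting each relation into the other shows that $h_i\circ g_i$ and $g_i\circ h_i$ coincide with the identity almost everywhere with respect to $E^{(i)}$ and $E^{A_i}$ respectively, so that $h_i$ is an almost everywhere bijection. This is the same conclusion packaged by theorem \ref{bijections}, remark \ref{information} and \cite{B5}: $\widetilde A_i$ and $A_i$ are two realizations of the sharp version of $F_i$, related by an a.e.\ bijection.

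Finally, exactly as in the proof of theorem \ref{bijective}, I would transport the kernel through $h_i$: after correcting $h_i$ on an $E^{A_i}$-null set so that it takes values in $\Gamma_i$, the map $\omega^{(i)}_\Delta:=\mu^{(i)}_\Delta\circ h_i$ is again a Markov kernel, since measurability and the probability-measure property are preserved under composition with the Borel map $h_i$, and
$$F_i(\Delta)=\mu^{(i)}_\Delta(\widetilde A_i)=\mu^{(i)}_\Delta\big(h_i(A_i)\big)=\omega^{(i)}_\Delta(A_i),$$
so $F_i$ is a smearing of $A_i$ for $i=1,2$; together with the first part this proves the theorem. I expect the only step requiring genuine care to be the identification of $\widetilde A_i$ with $A_i$ up to an a.e.\ bijection: one has to keep track of the sets of full spectral measure on which $g_i$ and $h_i$ are mutually inverse and check that $h_i$ maps almost all of $\sigma(A_i)$ into $\Gamma_i$, so that $\omega^{(i)}$ is well defined on $\sigma(A_i)$; everything else is routine.
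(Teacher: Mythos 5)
Your proposal is correct and follows essentially the same route as the paper: compatibility via theorem \ref{CNS} applied to the commuting spectral measures of $A_1^+$ and $A_2^+$, then theorem \ref{Markov} to obtain the sharp versions, the equality of von Neumann algebras to get an almost everywhere bijection between each sharp version and $A_i$, and finally the transport of the Markov kernel through that bijection (which is exactly the content of theorem \ref{bijective}, whose proof you simply unpack inline). The only difference is one of detail, not of substance: you make explicit the verification that $\chi_\Delta(A_i^+)$ is a Naimark dilation of $F_i$ and the functional-calculus fact behind the existence of the one-to-one functions, both of which the paper takes for granted.
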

\begin{proof}
Since $A^+_1$ and $A^+_2$ are compatible, the corresponding spectral measures $E^+_1$ and $E_2^+$ commute. Since $F_i(\Delta):=P\chi_{\Delta} (A^+_i)P=PE^+_i(\Delta)P$, theorem \ref{CNS} assures that $F_1$ and $F_2$ are compatible. By theorem \ref{Markov}, $F_1$ and $F_2$ are smearings of their  sharp versions $B_1$ and $B_2$ respectively. Since $B_i$ and $A_i$, $i=1,2$,  generate the same von Neumann algebra, there are one-to-one functions $f_i$ such that $f_i(A_i)=B_i$, $i=1,2$. By theorem \ref{bijective}, $F_i$ is a smearing of $A_i$ as well.
\end{proof}

\vskip.5cm
\noindent
The theorem is illustrated by the following diagram.
\vskip.5cm
\begin{equation*}
\xymatrix{ 
E_1^+\ar@{<->}[d]^{P}\ar@{<->}[r] & A^+_1\ar@{<->}[dd]^{P}\ar@{<->}[r]^{c} &  A^+_2 \ar@{<->}[dd]_{P}\ar@{<->}[r] & E_2^+\ar@{<->}[d]_{P} \\ 
F_1\ar@{<->}@/^1pc/[rrr]^{c}\ar@{<->}[rd]^{\omega^{(1)}}&  & &F_2\\
f_1( A_1)\ar@{<->}[u]^{\mu^{(1)}}\ar@{<->}[r]^{}& A_1 &  A_2\ar@{<->}[ur]^{\omega^{(2)}}\ar@{<->}[r]^{} & f_2(A_2)\ar@{<->}[u]_{\mu^{(2)}} 
}
\end{equation*}

\noindent
where $\xymatrix{ \ar@{<->}[r]^{P} &}$ denotes the relationship between a self-adjoint operator and its dilation as well as the relationship between a POVM and its Naimark's dilation while $\xymatrix{ \ar@{<->}[r]_{} &}$ denotes the equivalence of two self-adjoint operators as expressed in definition \ref{one-to-one} as well as the equivalence of a self-adjoint operator and the corresponding spectral measure (up to bijections). The functions $f_i$ in the diagram are one-to-one and both $A_i$ and $f_i(A_i)$ are sharp versions of $F_i$, $i=1,2$. Moreover, $\omega^{(i)}=\mu^{(i)}\circ f_i^{-1}$ and the arrow $\xymatrix{ \ar@{<->}[r]^{\mu} &}$ denotes equivalence in the sense specified in remark \ref{information}.

\vskip.5cm

\noindent
As a corollary of  the previous results we have that each couple of self-adjoint operators admits (up to bijections) a couple of compatible self-adjoint dilations.
\begin{corollary}
Let $A_1$ and $A_2$ be discrete self-adjoint operators. Then, there are two one-to-one functions $h_1$, $h_2$ such that $h_1(A_1)$ and $h_2(A_2)$ admit compatible self-adjoint dilations $A_1^+$ and $A_2^+$ respectively.
\end{corollary}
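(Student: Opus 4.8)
The plan is to deduce the corollary from Theorems~\ref{CNS1}, \ref{Pro} and \ref{Be2} by attaching to an arbitrary pair of discrete self-adjoint operators $A_1,A_2$ a pair of \emph{compatible} commutative POVMs, with discrete range, whose sharp versions recover $A_1$ and $A_2$ up to a one-to-one function.

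First I would take the trivial-noise smearing recalled at the start of the present section. Fix $\alpha\in(0,1)$, pick probability measures $\nu^{(1)},\nu^{(2)}$ on $\sigma(A_1),\sigma(A_2)$, and set
$$\widetilde{F}_1(\Delta)=\alpha\,E^{A_1}(\Delta)+(1-\alpha)\,\nu^{(1)}(\Delta)\,\mathbf{1},\qquad \widetilde{F}_2(\Delta)=(1-\alpha)\,E^{A_2}(\Delta)+\alpha\,\nu^{(2)}(\Delta)\,\mathbf{1},$$
that is, the smearings of $A_1,A_2$ by the Feller Markov kernels $\mu^{(1)}_{\Delta}(x)=\alpha\chi_{\Delta}(x)+(1-\alpha)\nu^{(1)}(\Delta)$ and $\mu^{(2)}_{\Delta}(x)=(1-\alpha)\chi_{\Delta}(x)+\alpha\nu^{(2)}(\Delta)$. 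Then I would check the following routine facts: $\widetilde{F}_1$ and $\widetilde{F}_2$ are compatible, a joint POVM being $\widetilde{F}(\Delta_1\times\Delta_2)=\alpha\,\nu^{(2)}(\Delta_2)E^{A_1}(\Delta_1)+(1-\alpha)\,\nu^{(1)}(\Delta_1)E^{A_2}(\Delta_2)$; each $\widetilde{F}_i$ is commutative, since $[\widetilde{F}_i(\Delta),\widetilde{F}_i(\Delta')]$ is a multiple of $[E^{A_i}(\Delta),E^{A_i}(\Delta')]=\mathbf{0}$; each $\widetilde{F}_i$ has discrete range, since $\widetilde{F}_i(\Delta)$ is an affine function of the projection $E^{A_i}(\Delta)$ and therefore shares a complete set of eigenvectors with it; and $\mathcal{A}^{W}(\widetilde{F}_i)=\mathcal{A}^{W}(E^{A_i})=\mathcal{A}^{W}(A_i)$, since $\alpha,1-\alpha\neq0$ let one write $E^{A_i}(\Delta)$ back in terms of $\widetilde{F}_i(\Delta)$ and $\mathbf{1}$.

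Next, since $\widetilde{F}_1,\widetilde{F}_2$ are compatible real POVMs, Theorem~\ref{CNS1} provides an extended Hilbert space $\mathcal{H}^+$ and two commuting self-adjoint operators $\widehat{A}_1,\widehat{A}_2$ with $\widetilde{F}_i(\Delta)=P\chi_{\Delta}(\widehat{A}_i)P$; in particular the spectral measure $\widehat{E}_i$ of $\widehat{A}_i$ is a Naimark dilation of $\widetilde{F}_i$. Each $\widetilde{F}_i$ being commutative with discrete range, Theorem~\ref{Be2} gives bounded one-to-one functions $f_i,\varphi_i$ with $\varphi_i(B_i)=\int f_i(t)\,d\widetilde{F}_i(t)$, where $B_i$ is the sharp reconstruction of $\widetilde{F}_i$; and Theorem~\ref{Pro} applied to the dilation $\widehat{A}_i$ gives $\int f_i(t)\,d\widetilde{F}_i(t)=P\,f_i(\widehat{A}_i)\,P$, whence $\varphi_i(B_i)=P\,f_i(\widehat{A}_i)\,P$. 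By item~2) of Theorem~\ref{Markov}, $\mathcal{A}^{W}(B_i)=\mathcal{A}^{W}(\widetilde{F}_i)=\mathcal{A}^{W}(A_i)$, so $B_i$ and $A_i$ generate the same von Neumann algebra; hence, exactly as in the proof of the preceding theorem (and consistently with Theorem~\ref{bijections}), $B_i=k_i(A_i)$ for some one-to-one $k_i$. Putting $h_i:=\varphi_i\circ k_i$ (still one-to-one) and $A_i^{+}:=f_i(\widehat{A}_i)$, I would conclude $h_i(A_i)=\varphi_i(B_i)=P\,A_i^{+}\,P$, so $A_i^{+}$ is a self-adjoint dilation of $h_i(A_i)$, while $[A_1^{+},A_2^{+}]=[f_1(\widehat{A}_1),f_2(\widehat{A}_2)]=\mathbf{0}$ because $[\widehat{A}_1,\widehat{A}_2]=\mathbf{0}$. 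Thus $A_1^{+},A_2^{+}$ are compatible self-adjoint dilations of $h_1(A_1),h_2(A_2)$.

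The step I expect to be the main obstacle is the careful verification that the concrete smearings $\widetilde{F}_i$ satisfy \emph{all} hypotheses that are invoked — especially the discreteness of the range required by Theorem~\ref{Be2}, commutativity, and the identity $\mathcal{A}^{W}(\widetilde{F}_i)=\mathcal{A}^{W}(A_i)$ — and the bookkeeping of the several one-to-one functions $f_i,\varphi_i,k_i$ (and any rescaling one inserts so that $\sigma(B_i)\subset[0,1]$ in Theorem~\ref{Markov}) so that the composite $h_i$ stays one-to-one and the commutation $[A_1^{+},A_2^{+}]=\mathbf{0}$ survives. Using Theorem~\ref{Pro} rather than the second equality of Theorem~\ref{Be2} is what keeps the argument from depending on which Naimark dilation appears in Theorem~\ref{Be2}.
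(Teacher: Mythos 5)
Your proposal is correct and follows essentially the same route as the paper: the same trivial-noise smearings of the spectral measures of $A_1$ and $A_2$ with the same explicit joint POVM, then theorem \ref{CNS} (equivalently \ref{CNS1}) to obtain commuting Naimark dilations, and finally theorem \ref{Be2} to pull the dilated operators back to $h_i(A_i)$ via one-to-one functions. Your extra care — verifying discreteness of the range, invoking theorem \ref{Pro} so the conclusion applies to the \emph{commuting} dilation rather than whichever dilation theorem \ref{Be2} happens to produce, and tracking the bijection $k_i$ between $A_i$ and the sharp version $B_i$ — only makes explicit steps the paper leaves implicit.
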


\begin{proof}
Let $E_1$ and $E_2$ be the spectral measures corresponding to $A_1$ and $A_2$ respectively. We can define the following compatible smearing of $E_1$ and $E_2$. 
\begin{align*}
F_1(\Delta_1)&=\int\mu^{(1)}_{\Delta_1}(x)\,dE_1(x)=\int\big[\lambda\chi_{\Delta_1}(x)+(1-\lambda)\nu^{(1)}(\Delta_1)\big]\,dE_1(x)\\
F_2(\Delta_2)&=\int\mu^{(2)}_{\Delta_2}(x)\,dE_2(x)=\int\big[(1-\lambda)\chi_{\Delta_2}(x)+\lambda\nu^{(2)}(\Delta_2)\big]\,dE_2(x)
\end{align*}
They are compatible since 
$$F(\Delta_1\times\Delta_2)=\lambda\nu^{(2)}(\Delta_2)E_1(\Delta_1)+(1-\lambda)\nu^{(1)}(\Delta_1)E_2(\Delta_2).$$
\noindent
is a joint POVM. By theorem \ref{CNS} there are two compatible Naimark's extensions $E^+_1$, $E_2^+$ corresponding to two self-adjoint operators $B_1^+$ and $B_2^+$. By theorem \ref{Be2}, there are one-to-one functions $f_1$, $f_2$ and $h_1$, $h_2$ such that $A_1^+:=f_1(B^+_1)$ and $A_2^+:=f_2^+(B_2^+)$ are dilations of $h_1(A_1)$ and $h_2(A_2)$ respectively\footnote{We have used the fact that $F(\Delta)=\mu_{\Delta}(A)$ is discrete if $A$ is discrete.}. 

\end{proof}

\noindent
The corollary is illustrated by the following diagram.
\vskip.5cm

\begin{equation*}
\xymatrix{ 
 & A^+_1\ar@{<->}[d]_{P}\ar@{<->}[r]^{c} &  A^+_2 \ar@{<->}[d]^{P} & \\ 
 A_1\ar@{<->}[r]^{}& h_1(A_1) & h_2( A_2)\ar@{<->}[r]^{} & A_2 
}
\end{equation*}
\vskip.5cm
\noindent

\vskip.5cm
\noindent
Now, we give a necessary and sufficient condition for two commutative POVMs $F_1$ and $F_2$ to be compatible which is based on the existence of two commuting self-adjoint dilations $A_1^+$ and $A^+_2$ of the sharp versions $A_1$ and $A_2$ respectively. 

\begin{theorem}\label{self}
Let $F_1$ and $F_2$ be two commutative POVMs such that the operators in their ranges are discrete. They are compatible if and only if the corresponding sharp versions $A_1$ and $A_2$ can be dilated to two compatible self-adjoint operators $A_1^+$, $A_2^+$ such that $P\chi_{\Delta}(A_i^+)P=F_i(f_i^{-1}(\Delta))$, $i=1,2$, with $f_i$ one-to-one.  
\end{theorem}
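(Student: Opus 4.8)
The plan is to prove the two implications separately, using Theorem~\ref{CNS} as the main structural tool and Theorem~\ref{Be2} as the bridge that converts between the sharp version $A_i$ of $F_i$, which acts in $\mathcal{H}$, and the self-adjoint operator attached to a Naimark dilation of $F_i$, which acts in the dilation space $\mathcal{H}^+$.

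For the ``if'' part, suppose $A_1^+$ and $A_2^+$ are compatible self-adjoint operators, with spectral measures $E_1^+$ and $E_2^+$, satisfying $P\chi_\Delta(A_i^+)P=F_i(f_i^{-1}(\Delta))$ for one-to-one $f_i$. Since $A_1^+$ and $A_2^+$ commute, so do $E_1^+$ and $E_2^+$. I would then relabel these PVMs by the injective measurable maps $f_i$, setting $\widetilde{E}_i^+(\Gamma):=E_i^+(f_i(\Gamma))$; because the ranges of the $F_i$ consist of discrete operators one may take $f_i$ to act on a countable set, so $f_i(\Gamma)$ is Borel and $\widetilde{E}_i^+$ is a bona fide PVM. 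The pair $\widetilde{E}_1^+,\widetilde{E}_2^+$ still commutes, and by injectivity of $f_i$ one gets $P\widetilde{E}_i^+(\Gamma)P=P\chi_{f_i(\Gamma)}(A_i^+)P=F_i(f_i^{-1}(f_i(\Gamma)))=F_i(\Gamma)$, so $\widetilde{E}_1^+$ and $\widetilde{E}_2^+$ are commuting Naimark dilations of $F_1$ and $F_2$; Theorem~\ref{CNS} then delivers the compatibility of $F_1$ and $F_2$.

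For the ``only if'' part, start from compatible $F_1,F_2$. By Theorem~\ref{CNS} there exist commuting Naimark dilations $E_1^+,E_2^+$ on a common $\mathcal{H}^+$; let $B_i^+:=\int\lambda\,dE_i^+$ be the associated commuting self-adjoint operators. Applying Theorem~\ref{Be2} to each $F_i$, whose sharp version is $A_i$, relative to this choice of Naimark dilation, I obtain bounded one-to-one functions $g_i$ and $h_i$ with $h_i(A_i)=\int g_i(t)\,dF_i(t)=Pg_i(B_i^+)P$. Now set $A_i^+:=g_i(B_i^+)$: since $g_i$ is one-to-one and $B_1^+,B_2^+$ commute, $A_1^+$ and $A_2^+$ are self-adjoint and commute, hence compatible, while $PA_i^+P=h_i(A_i)$ with $h_i$ one-to-one exhibits $A_i^+$ as a dilation of $A_i$ up to the equivalence of Definition~\ref{one-to-one} (the relation $A_i\leftrightarrow Pr\,A_i^+$ introduced after Theorem~\ref{Be2}). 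Finally, by the composition rule of the functional calculus $\chi_\Delta(A_i^+)=\chi_\Delta(g_i(B_i^+))=\chi_{g_i^{-1}(\Delta)}(B_i^+)=E_i^+(g_i^{-1}(\Delta))$, so $P\chi_\Delta(A_i^+)P=PE_i^+(g_i^{-1}(\Delta))P=F_i(g_i^{-1}(\Delta))$, and choosing $f_i:=g_i$ gives the asserted identity.

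The crux is the ``only if'' part, precisely the reconciliation of the two roles Naimark dilations play here: Theorem~\ref{CNS} supplies a \emph{compatible} pair of dilations of $F_1$ and $F_2$, but the self-adjoint operators $B_i^+$ attached to them bear no a priori relation to the sharp versions $A_i$, so one cannot simply take $B_i^+$ as the desired $A_i^+$. Theorem~\ref{Be2} is exactly what repairs this, at the price of the auxiliary one-to-one functions $g_i,h_i$, which is why the conclusion is only available ``up to bijections''. A secondary, measure-theoretic point --- that injective measurable images of Borel sets be Borel, so that the relabelled $\widetilde{E}_i^+$ (and pushforwards of PVMs in general) are well defined --- is what makes the discreteness hypothesis on the ranges of $F_1$ and $F_2$ a natural assumption at this stage.
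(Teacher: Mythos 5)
Your proof is correct and follows essentially the same route as the paper: both directions reduce to Theorem~\ref{CNS}, with Theorem~\ref{Be2} (together with the uniqueness of the sharp version up to bijections, Theorem~\ref{bijections}) supplying the one-to-one functions that link the sharp versions $A_i$ to the Naimark operators $B_i^+$ in the ``only if'' direction, and the composition rule $\chi_{\Delta}\circ f_i=\chi_{f_i^{-1}(\Delta)}$ doing the rest. Your relabelled PVMs $\widetilde{E}_i^+=E_i^+\circ f_i$ in the ``if'' direction are exactly the paper's $E^{A_i^+}=E_i^+\circ f_i^{-1}$ read in the other direction, and your explicit retention of $h_i(A_i)$ rather than writing $A_i$ outright merely makes the paper's implicit appeal to Theorem~\ref{bijections} more visible.
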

\begin{proof}
Suppose $F_1$ and $F_2$ to be compatible. Then, by theorem \ref{CNS}, there are two PVMs $E^+_1$ and $E_2^+$ such that $[E^+_1,E^+_2]=\mathbf{0}$ and $F_i(\Delta)=PE_i^+(\Delta)P=P\chi_{\Delta}(B_i^+)P$ where,  $B_i^+$ is the self-adjoint operator corresponding to $E_i^+$. By theorems \ref{Be2} and \ref{bijections}, we have 
\begin{align*}
PA^+_iP:=Pf_i(B^+_i)P=P\int f_i(\lambda)\,dE^+_i(\lambda)P=\int f_i(t)\,dF_i(t)=A_i\quad i=1,2
\end{align*} 
where, $f_i$ is one-to-one, $A_i^+=f_i(B_i^+)$ and $A_i$ is the sharp version of $F_i$, $i=1,2$. Therefore, $A^+_1$ and $A^+_2$ are commuting dilations of $A_1$ and $A_2$ respectively. Moreover, 
\begin{align*}
P\chi_{\Delta}(A_i^+)P&=P\int (\chi_{\Delta}\circ f_i)(\lambda)\,dE^+_i({\lambda})\,\,P\\
&=P\int \chi_{f_i^{-1}(\Delta)}(\lambda)\,dE^+_i({\lambda})\,\,P=PE^+(f_i^{-1}(\Delta))P=F_i(f_i^{-1}(\Delta)).
\end{align*}

\noindent
Conversely, suppose that $A_1^+$ and $A^+_2$ are compatible dilations of the sharp versions $A_1$ and $A_2$ respectively and that  $F_i(f_i^{-1}(\Delta))=P\chi_\Delta(A_i^+)P$, $i=1,2$. Then, $F_i(\Delta)=P\chi_{f_i(\Delta)}(A_i^+)P=PE^{A_i^+}(f_i(\Delta))P=PE_i^+(\Delta)P$, $i=1,2$, where, $E^{A_i^+}=E_i^+\circ f_i^{-1}$ is the spectral measure corresponding to $A^+_i$. Hence, $E_1^+$ and $E_2^+$ are compatible Naimark's extensions of $F_1$ and $F_2$ respectively and theorem \ref{CNS} ends the proof.
\end{proof}

\vskip.5cm
\noindent
The following diagram illustrates theorem \ref{self}.

\begin{equation*}
\xymatrix{ 
A^+_1\ar@{<->}[ddr]_{P} \ar@{<->}[r]^{} & E^+_1 \ar@{<->}[r]^{c} &E^+_2\ar@{<->}[r]^{} &A^+_2 \ar@{<->}[ddl]^{P}\\ 
 & F_1\ar@{<->}[r]^{c}\ar@{<->}[u]_{P}  &  F_2\ar@{<->}[u]^{P} & \\
& A_1 \ar@{<->}[u]_{\mu^{(1)}} & A_2 \ar@{<->}[u]^{\mu^{(2)}}  &
}
\end{equation*}

\vskip1cm

\noindent
Next we illustrate theorem \ref{self} by means of a relevant physical example.

\noindent
In the following $\ast$ denotes convolution, i.e., $(h\ast g)(x)=\int_{\mathbb{R}} h(y)g(x-y)d y$ while $\hat{g}$ denotes the Fourier Transform of $g$. 

\begin{example}\label{QP}
As a relevant physical example, we consider the position and momentum observables, $Q=\int q\,dQ(q)$ and $P=\int p\,dP(p)$ on the space $\mathcal{H}=L^2(\mathbb{R})$. We recall that $(Q\psi)(q)=q\,\psi(q)$ while $(P\psi)(q)=-i\frac{\partial \psi}{\partial q}(q)$.

It is possible to introduce compatible smearings $F^Q$ and $F^P$ of $Q$ and $P$ respectively and then to build a joint POVM for $F^Q$ and $F^P$. We do the converse, i.e., we  start from a POVM $F$ on the phase space $\Gamma=\mathbb{R}\times\mathbb{R}$ and show that its marginals are smearings of $Q$ and $P$ respectively.

\noindent
Let us consider the joint position-momentum POVM \cite{Ali,Busch,Davies,Guz,Holevo,Prugovecki,Schroeck1,Stulpe} 
\begin{equation*} \label{phase}
F(\Delta\times\Delta')=\int_{\Delta\times\Delta'}U_{q,p}\,\eta\,U^*_{q,p}\,d q\, d p=\int_{\Delta\times\Delta'}P_{q,p}\,d q\, d p
\end{equation*}
where, $U_{q,p}=e^{-iqP}e^{ipQ}$, $\eta:=P_{g}$ is the projector on the subspace generated by $g\in L^2(\mathbb{R})$, $\|g\|_2=1$ and $P_{q,p}=U_{q,p}\,\eta\,U^*_{q,p}$.
The marginals
\begin{align}
\label{approximate}
F^Q_{g}(\Delta)&:=F(\Delta\times\mathbb{R})=\int_{-\infty}^{\infty}({\bf 1}_{\Delta}\ast\vert g\vert^2)(q)\,dQ(q),\quad\Delta\in\mathcal{B}(\mathbb{R}),\\
F^P_{g}(\Delta)&:=F(\mathbb{R}\times\Delta)=\int_{-\infty}^{\infty}({\bf 1}_{\Delta}\ast\vert \hat{g}\vert^2)(p)\,dP(p),\quad\Delta\in\mathcal{B}(\mathbb{R})
\end{align}
are the unsharp position and momentum observables respectively (\cite{Davies,Schroeck1,Busch,B4,B9,B5}).  Notice that the maps $\mu_{\Delta}(q):=({\bf 1}_{\Delta}\ast \vert g\vert^2)(q)$ and $\hat{\mu}_{\Delta}(p):=({\bf 1}_{\Delta}\ast \vert \hat{g}\vert^2)(p)$ define two Markov kernels (\cite{B4,B5,B9}). 
Now, we can define the isometry 
\begin{align*}
W^\eta:\mathcal{H}&\to L^2(\Gamma,\mu)\\
\psi&\mapsto\langle U_{q,p}\,g,\psi\rangle
\end{align*}
\noindent
where, $\mu$ is the Lebesgue measure on $\Gamma=\R\times\R$. The map $W^\eta$ embeds $\mathcal{H}$ as a subspace of $L^2(\Gamma, \mu)$. The projection operator $\widetilde{P}^\eta$ from $L^2(\Gamma, \mu)$ to $W^\eta(\mathcal{H})$ is defined as follows
$$(\widetilde{P}^\eta f)(q,p)=\int_{\Gamma}\langle U_{q,p}\,g,U_{q',p'}\,g\rangle f(q',p')\,dq'dp'.$$

Next, we prove the existence of two commuting Naimark's dilations for $F^Q_g$ and $F^P_g$. It is sufficient to consider the following two PVMs
\begin{align*}
(\widetilde{E}^+_Q(\Delta)f)(q,p)=\chi_{\Delta}(q)f(q,p),\quad f\in L^2(\Gamma,\mu)\\
(\widetilde{E}_P^+(\Delta)f)(q,p)=\chi_{\Delta}(p)f(q,p),\quad f\in L^2(\Gamma,\mu)
\end{align*}   
They commute since they are multiplications by characteristic functions. Moreover, for any $f\in W^\eta(\mathcal{H})$,
\begin{align*}
(\widetilde{P}^\eta \widetilde{E}_Q^+(\Delta)f)(q,p)&=\int_{\Gamma}\langle U_{q,p}g,U_{q',p'}g\rangle \chi_{\Delta}(q')f(q',p')\,dq'dp'\\
&=\int_{\Delta\times\R}\langle U_{q,p}g,U_{q',p'}g\rangle f(q',p')\,dq'dp'\\
&=\int_{\Delta\times\R}\langle U_{q,p}g,U_{q',p'}g\rangle \langle U_{q',p'}\,g,\psi\rangle\,dq'dp'\\
&=\int_{\Delta\times\R}\langle U_{q,p}g,U_{q',p'}\,g\rangle\langle g,\,U^*_{q',p'}\psi\rangle\,d q'\, d p'\\
&=W^\eta\int_{\Delta\times\R}U_{q',p'}\,g\,\langle g,\,U^*_{q',p'}\psi\rangle\,d q'\, d p'\\
&=W^\eta\int_{\Delta\times\R}U_{q',p'}\,\eta\,U^*_{q',p'}\psi\,d q'\, d p'\\
&=[W^\eta F^Q_{g}(\Delta)(W^\eta)^{-1}f](q,p).
\end{align*} 

\noindent
which proves that $ \widetilde{E}_Q^+$ is a Naimark's dilation of $W^{\eta}\,F^Q_g\,(W^{\eta})^{-1}$. An analogous argument holds for $\widetilde{E}_P^+$ and $W^{\eta}\,F^Q_g\,(W^{\eta})^{-1}$.

Now, if we specialize ourselves to the case $g=\frac{1}{l\,\sqrt{2\,\pi}}\,e^{(-\frac{x^2}{2\,l^2})}$, $l\in\mathbb{R}-\{0\}$, we get, \cite{B4}
\begin{align*}
\widetilde{P}^{\eta}\Big(\int t\,d\widetilde{E}_Q^+(t)\Big)\widetilde{P}^\eta&=W^{\eta}\int t\,dF^Q_g(t)\,(W^{\eta})^{-1}=W^{\eta}\,Q\,(W^{\eta})^{-1}\\
\widetilde{P}^{\eta} \Big(\int t\,d\widetilde{E}_P^+(t)\Big)\widetilde{P}^\eta&=W^{\eta}\int t\,dF^P_g(t)\,(W^{\eta})^{-1}=W^{\eta}\,P\,(W^{\eta})^{-1}.
\end{align*}
\noindent
Therefore, the compatible operators $Q^+:=\int t\,d\widetilde{E}_Q^+(t)$ and $P^+:=\int t\,d\widetilde{E}_P^+(t)$ are dilations of $W^{\eta}\,Q\,(W^{\eta})^{-1}$ and $W^{\eta}\,P\,(W^{\eta})^{-1}$ respectively.  All that is summarized (up to isometry) in the following commuting diagram.

\begin{equation*}
\xymatrix{ 
Q^+\ar@{<->}[ddr]_{P^{\eta}} \ar@{<->}[r]^{} & E_Q^+ \ar@{<->}[r]^{c} &E_P^+\ar@{<->}[r]^{} &P^+ \ar@{<->}[ddl]^{P^{\eta}}\\ 
 & F^Q_g\ar@{<->}[u]_{P^{\eta}}\ar@{<->}[r]^{c}  &  F^P_g\ar@{<->}[u]^{P^\eta} & \\
& Q \ar@{<->}[u]_{\mu} & P \ar@{<->}[u]^{\hat{\mu}}  &
}
\end{equation*}

\end{example}

\section{Compatibility between effects}

In the present section we recall the definition of compatibility between two effects and show that two effects are compatible if and only if they can be dilated to two commuting projections. Then, we prove a sufficient condition of the compatibility.

\begin{definition}
Two effects $A_1$, $A_2$ are compatible if the POVMs $F_1=\{A_1, \mathbf{1}-A_1\}$ and $F_2=\{A_2, \mathbf{1}-A_2\}$ are compatible. 
\end{definition}
\noindent
Notice that the definition of compatibility between effects refers to the corresponding dicotomic POVMs and is therefore different from the joint measurability between self-adjoint operators in definition \ref{PVM}.
\begin{theorem}\label{effects}
Two effects $A_1$, $A_2$ are compatible if and only if there are two commutative  projections $E_1$, $E_2$ in an extended Hilbert space $\mathcal{H}^+$ such that $PE_1^+P=A_1$ and $PE_2^+P=A_2$.  
\end{theorem}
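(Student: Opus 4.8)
The plan is to derive Theorem \ref{effects} as a direct specialization of Theorem \ref{CNS} to the case of dichotomic POVMs. First I would observe that an effect $A_i$ is completely encoded in the two-outcome POVM $F_i=\{A_i,\mathbf{1}-A_i\}$ on the sample space $X_i=\{0,1\}$, so that compatibility of the effects $A_1,A_2$ means exactly compatibility of the POVMs $F_1,F_2$ in the sense of the earlier definition. Then I would apply Theorem \ref{CNS}: $F_1$ and $F_2$ are compatible if and only if there exist Naimark extensions $E_1^+,E_2^+$ (PVMs on $\{0,1\}$, hence each determined by a single projection, namely $E_i^+(\{1\})$, with $E_i^+(\{0\})=\mathbf{1}^+-E_i^+(\{1\})$) acting on an extended Hilbert space $\mathcal{H}^+$ with $[E_1^+,E_2^+]=\mathbf{0}$.

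The second step is to translate the Naimark condition $PE_i^+P = F_i$ (on $\mathcal{B}(\{0,1\})$) into the single-projection statement $PE_i^+(\{1\})P = A_i$. For the ``only if'' direction: set $E_i := E_i^+(\{1\})$; these are projections since $E_i^+$ is a PVM, $PE_iP = PE_i^+(\{1\})P = F_i(\{1\}) = A_i$, and commutativity of the four-element families $\{E_i^+(\Delta)\}$ is equivalent to commutativity of the two generating projections $E_1,E_2$ (the other values being $\mathbf{1}^+$ and $\mathbf{1}^+ - E_i$, which commute with everything). For the ``if'' direction: given commuting projections $E_1,E_2$ with $PE_iP=A_i$, define PVMs $E_i^+$ on $\{0,1\}$ by $E_i^+(\{1\})=E_i$, $E_i^+(\{0\})=\mathbf{1}^+-E_i$, $E_i^+(\emptyset)=\mathbf{0}$, $E_i^+(\{0,1\})=\mathbf{1}^+$; then $PE_i^+P=F_i$ because $P(\mathbf{1}^+-E_i)P = \mathbf{1} - A_i = F_i(\{0\})$ (using $P\mathbf{1}^+P=\mathbf{1}$ since $P$ projects onto $\mathcal{H}$), and $[E_1^+,E_2^+]=\mathbf{0}$ follows from $[E_1,E_2]=\mathbf{0}$. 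Now Theorem \ref{CNS} gives compatibility of $F_1,F_2$, i.e. of the effects $A_1,A_2$.

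I do not expect a serious obstacle here; the statement is essentially a dictionary translation, and the only points requiring a line of care are (i) the bookkeeping that a PVM on a two-point space is the same data as a single projection, so that ``commuting PVMs on $\{0,1\}$'' is the same as ``commuting projections,'' and (ii) the identity $P\mathbf{1}^+P=\mathbf{1}$ ensuring the complementary outcome is handled correctly when passing between $\mathbf{1}-A_i$ and $\mathbf{1}^+-E_i$. The mild subtlety worth flagging in the write-up is the one the authors already noted in their Remark after the definition: ``compatibility of effects'' refers to the dichotomic POVMs and is weaker than joint measurability of $A_1,A_2$ as self-adjoint operators, so the conclusion produces commuting \emph{dilating} projections $E_1,E_2$ on $\mathcal{H}^+$, not commutativity of $A_1,A_2$ themselves; the proof should make clear that the compression $P(\cdot)P$ destroys commutativity in general, exactly as in the POVM diagram preceding this section.
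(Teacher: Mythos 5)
Your proposal is correct and follows exactly the paper's own route: the paper likewise obtains the theorem as an immediate specialization of Theorem \ref{CNS} to the dichotomic POVMs $\{A_i,\mathbf{1}-A_i\}$, whose Naimark extensions $\{E_i^+,\mathbf{1}^+-E_i^+\}$ are each determined by a single projection. Your write-up merely makes explicit the bookkeeping (two-point PVM $\leftrightarrow$ single projection, $P\mathbf{1}^+P=\mathbf{1}$) that the paper leaves implicit.
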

\begin{proof}
By theorem \ref{CNS}, $\{A_1, \mathbf{1}-A_1\}$ and $\{A_2, \mathbf{1}-A_2\}$ are compatible if and only if there are two compatible Naimark extensions $\{E^+_1,\mathbf{1}-E_1^+\}$, $E^+_2,\mathbf{1}-E_2^+\}$. In particular, $A_1=PE^+_1P$, $A_2=PE^+_2P$ and $[E_1^+,E_2^+]=\mathbf{0}$. 
\end{proof}

\subsection{A condition for the compatibility of two effects}
Let $A_1$ and $A_2$ be two effects in $\mathcal{H}$. We can dilate them to two projections in an extended Hilbert space $\mathcal{H}^+$ by means of the following procedure (see \cite{Riesz}, page 461).

\noindent
Let $\mathcal{H}^+=\mathcal{H}\times\mathcal{H}$. The Hilbert space $\mathcal{H}$ can be embedded in $\mathcal{H}^+$ by identifying $\psi$ with $\begin{pmatrix}
\psi\\
0
\end{pmatrix}$. Next, we write the elements of $\mathcal{H}^+$ as column vectors 
$\begin{pmatrix}
\psi_1\\
\psi_2
\end{pmatrix}$
and operators on $\mathcal{H}^+$ as matrices $\begin{pmatrix}
A_{1,1} & A_{1,2}\\
A_{2,1} & A_{2,2}
\end{pmatrix}$
where, $A_{i,j}$ is a bounded self-adjoint operator on $\mathcal{H}$.

Now, for each effect $A_i$ we define the operator 
\begin{equation}\label{extension}E_i^+=\begin{pmatrix}
A_i & \sqrt{A_i(\mathbf{1}-A_i)}\\
\sqrt{A_i(\mathbf{1}-A_i)} & \mathbf{1}-A_i
\end{pmatrix}
\end{equation}
\noindent
Notice that 
\begin{equation*}P\begin{pmatrix}
A_i & \sqrt{A_i(\mathbf{1}-A_i)}\\
\sqrt{A_i(\mathbf{1}-A_i)} & \mathbf{1}-A_i
\end{pmatrix}P=
A_i
\end{equation*}
\noindent
and 
\begin{equation*}\begin{pmatrix}
A_i & \sqrt{A_i(\mathbf{1}-A_i)}\\
\sqrt{A_i(\mathbf{1}-A_i)} & \mathbf{1}-A_i
\end{pmatrix}^2=
\begin{pmatrix}
A_i & \sqrt{A_i(\mathbf{1}-A_i)}\\
\sqrt{A_i(\mathbf{1}-A_i)} & \mathbf{1}-A_i
\end{pmatrix}
\end{equation*}
\noindent
where, $P$ is the projection from $\mathcal{H}^+$ onto $\mathcal{H}$, i.e., $P\begin{pmatrix}
\psi_1\\
\psi_2
\end{pmatrix}=\begin{pmatrix}
\psi_1\\
0
\end{pmatrix}$.
\noindent
By means of the Naimark dilation we just introduced, we can state the following condition for the compatibility of two effects. In the following $B_1=\sqrt{A_1(\mathbf{1}-A_1)}$ and $B_2=\sqrt{A_2(\mathbf{1}-A_2)}$.
\begin{proposition}\label{condition}
 Two effects $A_1$, $A_2$ such that $[A_1,A_2]+[B_1,B_2]=\mathbf{0}$ and $\{A_1,B_2\}-\{B_1,A_2\}=B_2-B_1$ are compatible.
\end{proposition}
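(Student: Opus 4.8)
The plan is to invoke Theorem \ref{effects}: it suffices to exhibit two commuting projections $E_1^+$, $E_2^+$ on an extended Hilbert space whose compressions to $\mathcal{H}$ are $A_1$ and $A_2$. The natural candidates are already on the table, namely the canonical dilations
\begin{equation*}
E_i^+=\begin{pmatrix} A_i & B_i\\ B_i & \mathbf{1}-A_i\end{pmatrix},\qquad i=1,2,
\end{equation*}
with $B_i=\sqrt{A_i(\mathbf{1}-A_i)}$, which have already been checked to be projections satisfying $PE_i^+P=A_i$. So the whole proposition reduces to a single computation: showing that the two stated algebraic identities are exactly equivalent to $[E_1^+,E_2^+]=\mathbf{0}$.

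First I would write out the $2\times 2$ operator matrix product $E_1^+E_2^+$ and $E_2^+E_1^+$ block by block. The $(1,1)$ entry of the commutator is $A_1A_2+B_1B_2-A_2A_1-B_2B_1=[A_1,A_2]+[B_1,B_2]$; the $(2,2)$ entry works out to $-[A_1,A_2]-[B_1,B_2]$ (using $[\mathbf{1}-A_1,\mathbf{1}-A_2]=[A_1,A_2]$), so the first hypothesis $[A_1,A_2]+[B_1,B_2]=\mathbf{0}$ kills both diagonal blocks simultaneously. The $(1,2)$ entry is $A_1B_2+B_1(\mathbf{1}-A_2)-A_2B_1-B_2(\mathbf{1}-A_1)=A_1B_2+B_2A_1-(B_1A_2+A_2B_1)+(B_1-B_2)=\{A_1,B_2\}-\{B_1,A_2\}-(B_2-B_1)$, which vanishes precisely by the second hypothesis. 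Finally I would note that the $(2,1)$ entry is the adjoint (or a symmetric analogue) of the $(1,2)$ entry — since all the $A_i$, $B_i$ are self-adjoint, the $(2,1)$ block is the adjoint of the $(1,2)$ block of $[E_1^+,E_2^+]$, hence also zero once the $(1,2)$ block is. Thus $[E_1^+,E_2^+]=\mathbf{0}$, and Theorem \ref{effects} gives compatibility of $A_1$ and $A_2$.

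The only mild subtlety — the step I would flag as the place where care is needed — is the bookkeeping in the off-diagonal blocks: one must use that $B_i$ commutes with $A_i$ (so that, e.g., $A_1B_1$-type cross terms arising from the same index combine correctly) and keep track of signs when rewriting differences of products as commutators and anticommutators. There is no analytic difficulty at all here: everything is a finite algebraic manipulation with bounded self-adjoint operators, and the square roots $B_i$ are well defined and self-adjoint because $A_i(\mathbf{1}-A_i)\ge\mathbf{0}$ for an effect $A_i$. Once the four blocks are shown to vanish, the proof is complete with no further appeal beyond Theorem \ref{effects}.
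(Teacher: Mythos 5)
Your proposal is correct and follows essentially the same route as the paper: both use the canonical dilations of equation (\ref{extension}), compute $E_1^+E_2^+$ and $E_2^+E_1^+$ blockwise, and observe that the two stated identities are exactly the vanishing of the diagonal and off-diagonal blocks of the commutator (the paper, like you, notes that the $(2,1)$ condition is redundant given the $(1,2)$ one). The only blemishes are cosmetic: the $(2,2)$ block of the commutator is $+[A_1,A_2]+[B_1,B_2]$ rather than its negative, and the $(2,1)$ block of $[E_1^+,E_2^+]$ is \emph{minus} the adjoint of the $(1,2)$ block; neither affects the conclusion.
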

\begin{proof}
By theorem \ref{effects}, $A_1$ and $A_2$ are compatible if and only if they can be extended to two commuting projections $E_1^+$ and $E^+_2$ respectively. Let us consider the extension $E_1^+$ and $E^+_2$ in equation (\ref{extension}) and prove that they commute.

We have, 
\begin{equation*}
E^+_1E^+_2=\begin{pmatrix}
A_1A_2+B_1B_2 & A_1B_2+B_1(\mathbf{1}-A_2)\\
B_1A_2+B_2-A_1B_2 & B_1B_2+ \mathbf{1}-A_2-A_1+A_1A_2
\end{pmatrix}
\end{equation*}
\noindent
and
\begin{equation*}
E^+_2E^+_1=\begin{pmatrix}
A_2A_1+B_2B_1 & A_2B_1+B_2(\mathbf{1}-A_1)\\
B_2A_1+B_1-A_2B_1 & B_2B_1+ \mathbf{1}-A_2-A_1+A_2A_1
\end{pmatrix}
\end{equation*}
\noindent
The two matrices coincide if and only if
\begin{align*}
A_1A_2+B_1B_2&=A_2A_1+B_2B_1\\
A_1B_2+B_1(\mathbf{1}-A_2)&=A_2B_1+B_2(\mathbf{I}-A_1)
\end{align*} 
\noindent
which proves the thesis.
\end{proof}
\noindent

\newpage

\end{document}